\newtheorem{theorem}{Theorem}
\newtheorem{corollary}{Corollary}
\newtheorem{lemma}{Lemma}
\newtheorem{assumption}{Assumption}
\numberwithin{theorem}{section}
\numberwithin{lemma}{section}
\numberwithin{equation}{section}
\numberwithin{proposition}{section}
\numberwithin{corollary}{section}
\numberwithin{assumption}{section}
\newcommand{\req}[1]{Eq.\,(\ref{#1})}
\begin{document} 
\title{Phase Space Homogenization of Noisy Hamiltonian Systems}

\author[Birrell]{Jeremiah Birrell}
\address{Department of Mathematics\\
University of Arizona\\
Tucson, AZ, 85721, USA}
\email{jbirrell@math.arizona.edu}

\author[Wehr]{Jan Wehr}
\address{Department of Mathematics and Program in Applied Mathematics\\
University of Arizona\\
Tucson, AZ, 85721, USA}
\email{wehr@math.arizona.edu}

\subjclass{ 60H10, 82C31}

\keywords{homogenization, stochastic differential equation, Langevin equation,  small mass limit, noise-induced drift}

\date{\today}

\begin{abstract}
We study the dynamics  of an inertial particle coupled to forcing, dissipation, and noise in the small mass limit.  We derive an expression for the limiting  (homogenized) joint distribution of the position and (scaled) velocity degrees of freedom. In particular, weak convergence of the joint distributions is established, along with a bound on the convergence rate for a wide class of expected values.
\end{abstract}

\maketitle

\section{Introduction}
The motion of  a diffusing particle of non-zero mass, $m$, can be modeled by a stochastic differential equation (SDE) of the form
\begin{align}\label{model_sys}
dq_t=v_t dt,\hspace{2mm} m dv_t=-\gamma v_t dt+\sigma dW_t,
\end{align}
where $\gamma$ and $\sigma$ are the dissipation (or: drag) and diffusion coefficients respectively and $W_t$ is a Wiener process.   Smoluchowski \cite{smoluchowski1916drei} and Kramers \cite{KRAMERS1940284} pioneered the study of such diffusive systems in the small mass limit; see  \cite{Nelson1967} for a detailed discussion of the early literature. The field has since expanded far beyond  \req{model_sys} to more complicated models and settings, see for example \cite{doi:10.1137/S1540345903421076,Chevalier2008,bailleul2010stochastic,pinsky1976isotropic,pinsky1981homogenization,Jorgensen1978,dowell1980differentiable,XueMei2014,angst2015kinetic,bismut2005hypoelliptic,bismut2015}. 

In particular, more recently there has been interest in the phenomenon of {\em noise-induced drift}. An SDE can be derived that governs the dynamics of the state, $q^m_t$ (here and in the sequel we use a superscript to denote the $m$ dependence), in the limit $m\rightarrow 0$ and, when $\gamma$ ($\sigma$ if the Stratonovich integral is used) is state-dependent, the limiting equation can be shown to involve an additional drift term that was not present in the original system.   This was first derived in \cite{PhysRevA.25.1130} and has been studied in numerous subsequent works \cite{Sancho1982,volpe2010influence,Hottovy2014,herzog2015small,particle_manifold_paper,BirrellHomogenization}.  See  \cite{Hottovy2014} for further  references and discussion.  See also \cite{FritzRoughPath} for a rough paths perspective on the singular nature of the small mass limit. 

 Such problems can be classified under the broad umbrella of homogenization, for which \cite{fouque2007wave} and   \cite{pavliotis2008multiscale}   are recent  sources.  In this paper, we study the small mass limit of the joint distribution of $q^m_{t_1},...,q^m_{t_N},\sqrt{m} v^m_{t_1},...,\sqrt{m} v^m_{t_N}$.  In particular, we prove that that the $\sqrt{m}$ scaling of the velocity is the correct one to produce a nontrivial weak limit.  This is a generalization of previous results referenced above, which considered only the limit of the state variables $q_t^m$.   

In fact, a crucial step of the proof of the main result in \cite{Hottovy2014} consists of showing that the kinetic energy $m\|v^m_t\|^2$ is of order one.  This upper bound strongly suggests that $v^m_t$ diverges as ${1 \over \sqrt{m}}$.  In the present work we prove a much more detailed statement, which identifies the limiting distribution of the scaled velocity $\sqrt{m}v^m_t$.  This provides a more precise picture of homogenization and provides a tool to study the behavior of physically important quantities such as entropy production.  

Our results hold for state-dependent, matrix-valued drag and diffusion, even if the fluctuation dissipation relation is not satisfied, and therefore there is no notion of local temperature. We will show that there is still a notion of local equilibrium that describes the limit of the scaled velocity process in this case.

\subsection{Prior Results}

Generalizing \req{model_sys} to allow for  time and state-dependent drag, noise, and external forcing, we arrive at the type of Langevin equation that will be studied in this work
\begin{align}
dq_t^m=&v_t^m dt,\label{q_eq0}\\
md(v^m_t)_i=&\left(-\tilde \gamma_{ik}(t,q_t^m) (v_t^m)^k+  F_i(t,x^m_t)\right)dt+\sigma_{i\rho}(t,q_t^m)dW^\rho_t,\label{v_eq}
\end{align}
where $x_t^m\equiv (q_t^m,p_t^m)$, $\tilde\gamma$, $F$, and $\sigma$ are continuous, and $\tilde\gamma$ consists of a symmetric drag matrix, $\gamma$, and an antisymmetric part (magnetic field) generated by a $C^2$ vector potential, $\psi$:
\begin{align}\label{tilde_gamma_def}
\tilde \gamma_{ik}(t,q)\equiv\gamma_{ik}(t,q) +\partial_{q^k}\psi_i(t,q)-\partial_{q^i}\psi_k(t,q).
\end{align}

Stated in the framework  of Hamiltonian systems, our results cover Hamiltonians of the form
\begin{align}\label{hamiltonian_family}
H(t,q,p)=\frac{1}{2 m}\|p-\psi(t,q)\|^2+V(t,q)
\end{align}
where  $\psi$ represents the vector potential of an electromagnetic field (charge set to one) and the $C^2$ function $V$ represents an (electrostatic) potential.  The relation to \req{q_eq0}-\req{v_eq}  is as follows:\\
The canonical momentum, $p_t^m$, is related to the velocity, $v_t^m$, by $mv_t^m= p_t^m-\psi(t,q_t^m)$.   The total forcing is
 \begin{align}
F(t,x)=-\partial_t\psi(t,q)-\nabla_q V(t,q)+\tilde F(t,x),
\end{align} 
where $\tilde F$ is any additional external forcing.  We will think of equations \req{q_eq0}-\req{v_eq} as coming from an electromagnetic Hamiltonian in this way, but because we allow for a non-Hamiltonian external forcing $\tilde F$, this is largely a stylistic choice.

It will be convenient to make the definition
\begin{align}\label{u_def}
u_t^m=mv_t^m.
\end{align}
Note that $v_t^m=O(m^{-1/2})$ translates into $u_t^m=O(m^{1/2})$. Using \req{u_def}, we rewrite the system as
\begin{align}
dq_t^m=&\frac{1}{m }u_t^m dt,\label{q_eq}\\
d(u^m_t)_i=&\left(-\frac{1}{m}\tilde \gamma_{ik}(t,q_t^m) (u_t^m)^kdt+  F_i(t,x^m_t)\right)dt+\sigma_{i\rho}(t,q_t^m)dW^\rho_t.\label{u_eq}
\end{align}
In \cite{BirrellHomogenization} we showed that, for a large class of such systems, there exists unique global in time solutions $(q_t^m,u_t^m)$ that converge to $(q_t,0)$ as $m\rightarrow 0$, where  $q_t$ is the solution to a certain limiting SDE.  We summarize the precise mode of convergence in Assumption \ref{assump:conv} below, which we take as the  starting point for this work.  See Appendix \ref{app:assump} for a  list of properties that guarantee that the following  holds.  
\begin{assumption}\label{assump:conv}
 For any  $T>0$, $p>0$  we have
\begin{align}\label{results_summary1}
\sup_{t\in[0,T]} E\left[\|u_t^m\|^p\right]^{1/p}=O(m^{1/2}),\hspace{2mm} \sup_{t\in[0,T]}E\left[\|q_t^m-q_t\|^p\right]^{1/p}=O(m^{1/2}) 
\end{align}
as $m\rightarrow 0$, where $q_t$ is the solution to an SDE of the form 
\begin{align}\label{q_SDE}
dq_t=& \tilde \gamma^{-1}(t,q_t)F(t,q_t,\psi(t,q_t))dt+S(t,q_t)dt+\tilde \gamma^{-1}(t,q_t)\sigma(t,q_t) dW_t.
\end{align}
$S(t,q)$ is called the {\em noise induced drift}, see \cite{BirrellHomogenization}, and is given by
(employing the summation convention on repeated indices):
\begin{enumerate}
\item $S^i(t,q)\equiv  \partial_{q^k}(\tilde\gamma^{-1})^{ij}(t,q) \delta^{kl}G_{jl}^{rs}(t,q)\Sigma_{rs}(t,q)$,
\item $G_{ij}^{kl}(t,q)\equiv \delta^{rk}\delta^{sl}\int_0^\infty (e^{-\zeta \tilde\gamma(t,q)})_{ir} (e^{-\zeta \tilde\gamma(t,q)})_{js} d\zeta$,
\item $\Sigma_{ij}\equiv\sum_\rho\sigma_{i\rho}\sigma_{j\rho}$.
\end{enumerate}

The initial conditions are assumed to satisfy $E[\|q^m_0\|^p]<\infty$, $E[\|q_0\|^p]<\infty$, and $E[\|q_0^m-q_0\|^p]^{1/p}=O(m^{1/2})$ for all $p>0$.  $q_t$ also satisfies the bound
\begin{align}\label{q_Lp_bound}
E\left[\sup_{t\in[0,T]}\|q_t\|^p\right]<\infty
\end{align}
for all $T>0$, $p>0$.
\end{assumption}
Note that we define the components of $\tilde\gamma^{-1}$ so that 
\begin{align}\label{tilde_gamma_inv_def}
(\tilde\gamma^{-1})^{ij}\tilde\gamma_{jk}=\delta^i_k,
\end{align}
and for any $v_i$ we define the contraction $(\tilde\gamma^{-1}v)^i=(\tilde\gamma^{-1})^{ij}v_j$.

As stated above, a comprehensive list of assumptions that guarantee the  convergence and boundedness properties from Assumption \ref{assump:conv} can be found in Appendix \ref{app:assump}. Several of them (and their consequences) will be explicitly used in the remainder of this paper and so we restate them here.
\begin{assumption}\label{assump:bounds}

For any $T>0$:
\begin{enumerate}
\item $\tilde \gamma$ is Lipschitz in $(t,q)$ for $(t,q)\in [0,T]\times\mathbb{R}^n$.
 \item $ \tilde \gamma$, $ \tilde \gamma^{-1}$, $\sigma$, and $\tilde F$, $\nabla_q V$,  $\partial_t\psi$, and $S$ are bounded on $[0,T]\times\mathbb{R}^n$.
\item  $\|u^m_0\|^2\leq C m$ for some $C>0$.
\item $\gamma$ is symmetric with eigenvalues bounded below by a constant $\lambda>0$  on $[0,T]\times\mathbb{R}^n$.  Note that this implies the real parts of the eigenvalues of $\tilde\gamma$ are also bounded below by $\lambda$. (See Lemma \ref{eig_bound_lemma1}.)
\end{enumerate}
We make the additional assumptions, not needed in \cite{BirrellHomogenization}, that
\begin{enumerate}
  \setcounter{enumi}{4}
\item $\Sigma\equiv\sigma\sigma^T$ has eigenvalues bounded below by $\mu>0$ on $[0,T]\times\mathbb{R}^n$.
\item $\sigma$ is Lipschitz in  $(t,q)\in [0,T]\times\mathbb{R}^n$.
\end{enumerate}
\end{assumption}
The boundedness assumptions we make can likely be relaxed by using the techniques developed in \cite{herzog2015small} and employed in \cite{BirrellHomogenization}, but we don't pursue that here.

Our calculation will use a martingale representation result which requires us to be precise about some properties of the probability space on which our equations are formulated.
\begin{assumption}\label{assump:prob_space}
Given a Wiener process, $W_t$, on some probability space $(\Omega,\mathcal{H},P)$, let $\mathcal{F}^W_t$ be the natural filtration of $W_t$ and $\mathcal{C}$ be any  sub $\sigma$-algebra of $\mathcal{H}$ that is independent of $\mathcal{F}^W_\infty$. Define $\mathcal{G}^{W,\mathcal{C}}_t\equiv \sigma(\mathcal{F}^W_t\cup\mathcal{C})$ and complete it with respect to $(\mathcal{G}^{W,\mathcal{C}}_\infty,P)$ to form $\overline{\mathcal{G}^{W,\mathcal{C}}_t}$.  Note that $(W_t,\overline{\mathcal{G}^{W,\mathcal{C}}_t})$ is still a Wiener process on $(\Omega,\overline{\mathcal{G}^{W,\mathcal{C}}}_\infty,P)$ and this space satisfies the usual conditions \cite{karatzas2014brownian}. 

 For the remainder of this paper, it is assumed that we work within a filtered probability space
\begin{align}\label{prob_space_def}
 (\Omega,\mathcal{F},\mathcal{F}_t,P)\equiv (\Omega, \overline{\mathcal{G}^{W,\mathcal{C}}}_\infty,\overline{\mathcal{G}^{W,\mathcal{C}}_t} ,P)
\end{align}
constructed in the manner described above from some Wiener process, $W_t$, (the same one used to formulate the SDEs) and some sub sigma-algebra, $\mathcal{C}$, independent of $W$.
\end{assumption}

\subsection{Fundamental Solution}

A key tool in this paper will be the fundamental solution to the equation
\begin{align}\label{Phi_eq}
\frac{d}{dt}\Phi_t^m=-\frac{1}{m}\tilde \gamma(t,q^m_t)\Phi^m_t,\hspace{2mm} \Phi^m_0=I.
\end{align}
We alert the reader that here, and in the sequel, the superscript $m$ on matrix-valued quantities denotes the value of the mass (similar to the vector-valued quantities $q_t^m$ etc.) and not a power.  We also emphasize that \req{Phi_eq} is solved pathwise; for a fixed realization of $q_t^m$, where $q_t^m$ comes from the solution of \req{q_eq}-\req{u_eq}, the  ODE \req{Phi_eq} is solved, thus defining a $C^1$ process $\Phi^m_t$.

The symmetric part of $ \tilde \gamma$ is $\gamma$, which has eigenvalues bounded below by $\lambda>0$.  Hence (see, for example, p.86 of \cite{teschl2012ordinary}), for $t\geq s$, we have the important bound
\begin{align}\label{fund_sol_decay}
\|\Phi^m_t(\Phi^m_s)^{-1}\|\leq e^{-\lambda(t-s)/m}.
\end{align}
Note that, for this bound, it is important that the $\ell^2$ matrix norm is used.

\req{u_eq} is a linear equation for $u_t^m$, so $\Phi^m$ furnishes us with the explicit solution in terms of $q_t^m$:
\begin{align}\label{u_sol}
u^m_t=\Phi^m_t\left(u^m_0+\int_0^t (\Phi^m_s)^{-1}F(s,q^m_s) ds+\int_0^t(\Phi^m_s)^{-1} \sigma(s,q^m_s) dW_s\right).
\end{align}

It will also be important to recall the following decomposition of the stochastic convolution term (Lemma 5.1 in \cite{particle_manifold_paper}).
\begin{align}\label{convol_decomp}
&\Phi^m_t \int_0^t(\Phi^m_s)^{-1} \sigma(s,q^m_s) dW_s\\
=&\Phi^m_t \int_0^t \sigma(s,q^m_s) dW_s+\frac{1}{m}\Phi^m_t\int_0^t(\Phi^m_s)^{-1} \tilde \gamma(s,q^m_s) \int_s^t \sigma(r,q^m_r) dW_r ds.\notag
\end{align}

\subsection{Summary of Results}\label{sec:summary}
Define
\begin{align}\label{z_def}
z_t^m=u_t^m/\sqrt{m}=\sqrt{m}v_t^m.
\end{align}
 Our primary result is an expression for the limiting joint distribution of the random variables $q_{t_1}^m,...,q_{t_N}^m,z_{t_1}^m,...z_{t_N}^m$, $0<t_1<...<t_N$,  as $m\rightarrow 0$  in terms of the limiting position process $q_t$.  

More generally, we will consider random variables 
\begin{align}\label{Y_def}
Y^m\equiv (J^m, q_{t_1}^m,...,q_{t_N}^m, z_{t_1}^m,...,,z_{t_N}^m),
\end{align} where $J^m$ are $\mathbb{R}^d$-valued $L^2$ random variables that satisfy:\\
\begin{assumption}\label{assump:J}
We assume that $J^m$ converges in $L^2$ to  some $\mathbb{R}^d$-valued random variable $J$.
\end{assumption}
\noindent We are purposely general here, having in mind certain processes constructed from the $q_t^m$'s, such as $J_t^m=\int_0^t g(r,q_r^m)dr$. Such processes appear, for example, in the computation of entropy production \cite{Chetrite2008,gawedzki2013fluctuation,Birrell_entropy}.

As part of our main result, Theorem \ref{thm:conv_dis}, we will prove  that the distributions of  $(J^m,q^m_{t_1},...,q^m_{t_N},z_{t_1}^m,...,z_{t_N}^m)$ converge weakly to 
\begin{align}
d\nu=&  \left(\prod_{j=1}^N\frac{1}{(2\pi)^{n/2} |\det M(t_j,q_j)|^{1/2}}\exp\left[-z_j\cdot M^{-1}(t_j,q_j)z_j/2\right]dz_j\right)\\
&\times \mu_{J,t}(dJ,dq_1,...,dq_N)\notag
\end{align}
as $m\to 0$, where $q_t$ is the solution to the limiting SDE, \req{q_SDE}, $\mu_{J,t}$ is the distribution of $(J,q_{t_1},...q_{t_N})$, and
\begin{align}\label{M_def}
M(t,q)=\int_0^\infty  e^{-\tilde\gamma(t,q) \zeta}\Sigma(t,q) e^{-\tilde\gamma^T(t,q) \zeta}d\zeta.
\end{align}
(Recall that $\Sigma=\sigma\sigma^T$). For $N=1$, we also derive a bound on the convergence rate of $E[h(Y^m)]$ for a wide class of functions $h$.  See Section \ref{sec:conv_rate}, especially Theorem \ref{thm:conv_dis2}.

In Corollary \ref{corr:fluc_dis} we give a special case of this result, namely if a fluctuation dissipation relation holds pointwise for a time and position dependent ``temperature" $T(t,q)$, i.e.
\begin{align}
\Sigma_{ij}(t,q)=2k_BT(t,q) \gamma_{ij}(t,q),
\end{align}
then the limiting distribution is
\begin{align}
d\nu = \left(\prod_{j=1}^N\left(\frac{\beta(t_j,q_j)}{2\pi}\right)^{n/2}\exp\left[- \beta(t_j,q_j) \|z_j\|^2/2\right]dz_j\right) \mu_{J,t}(dJ,dq_1,...,dq_N).
\end{align}
Here we recognize the Gibbs measure for the $z$-variables, and so we can interpret  this corollary as expressing an instantaneous equilibration of the scaled momentum variables (in particular, of the kinetic energy)  in the limit $m\to 0$.  We also see that there an asymptotic independence of the (rescaled) momentum and past history of the position.  In particular, if the temperature (or more generally $M$ from \req{M_def}) doesn't depend on $q$, then the $z_t^m$'s converge to independent Gaussians i.e. uncorrelated in time and independent of the $q_t$'s as well.   Note that, although $q_t=\int_0^t v_sds$, this does not imply that $q_t$ is a Wiener process.  See \cite{Birrell_entropy} for further information on such integral processes.

\section{Computing the Limiting Distribution}\label{sec:limit_dist}
Fix $N>0$, $0<t_1<...<t_N$ and let $Y^m$ be defined by \req{Y_def}. Weak convergence of the distributions of $Y^m$ will be proven by showing pointwise convergence of the Fourier transforms of $Y^m$.

Before considering the general case, we consider the much simpler situation where the forcing, $F$, and initial velocity, $v_0^m$, vanish and $\tilde\gamma$ and $\sigma$  do not depend on $q$ or $t$. In addition to being a more transparent illustration of some of our methods, we will see in Section \ref{sec:simp_seq} that the full result can be reduced to this case.
\subsection{Time and State-Independent Case}\label{sec:state_ind}
In the case of constant $\tilde\gamma$, $\sigma$ the solution of \req{Phi_eq} is simply a matrix exponential $\Phi_t^m=e^{-t\tilde\gamma/m}$ and the formulas \req{u_sol} and \req{z_def} for the scaled velocity process can be written
\begin{align}\label{z_sol_simple}
z^m_t=\frac{1}{\sqrt{m}}e^{-t\tilde\gamma/m}\int_0^t e^{s\tilde\gamma /m} \sigma dW_s.
\end{align}
In particular, it does not depend on the position process, $q_t^m$.

We now compute the limit of its Fourier transform as $m\to 0$.  In fact, it will be useful  to derive a more general result, namely the small mass limit of quantities of the form
\begin{align}
 \tilde H_t^m\equiv E\left[\tilde h\left(\frac{1}{\sqrt{m}}  e^{-\tilde\gamma t/m} \int_0^t e^{\tilde\gamma s/m} \sigma  dW_s\right)\right],
\end{align}
where  $\tilde h:\mathbb{R}^n\rightarrow\mathbb{C}$ is a  polynomially bounded continuous function. Recall that we are assuming that the symmetric part of $\tilde \gamma$ has eigenvalues bounded below by $\lambda$ and $\Sigma\equiv\sigma\sigma^T$ has eigenvalues bounded below by $\mu$.  For our subsequent purposes, it is very important that we be explicit about how our bounds depend on these, and other, constants.

The process $\frac{1}{\sqrt{m}}  e^{-\tilde\gamma t/m} \int_0^t e^{\tilde\gamma s/m} \sigma  dW_s$ is normally distributed with mean zero and covariance
\begin{align}
M_{m,t}=&\frac{1}{m}\int_0^t (e^{-\tilde\gamma (t-s)/m} \sigma)(e^{-\tilde\gamma (t-s)/m} \sigma)^Tds\\
=&\frac{1}{m}\int_0^t  e^{-\tilde\gamma (t-s)/m} \Sigma e^{-\tilde\gamma ^T (t-s)/m}ds,\notag
\end{align}
see Lemma \ref{lemma:exp_mart}.  

We have
\begin{align}
\tilde\gamma M_{m,t}+M_{m,t}\tilde\gamma^T=&\int_0^t \frac{d}{ds}\left[ e^{-\tilde\gamma (t-s)/m}\Sigma e^{-\tilde\gamma^T (t-s)/m}\right]ds\\
=&\Sigma -e^{-\tilde\gamma t/m}\Sigma e^{- \tilde\gamma^T t/m},\notag
\end{align}
i.e. $M_{m,t}$ satisfies a Lyapunov equation. $\tilde \gamma$ has eigenvalues with positive real parts, so the  Lyapunov equation has a unique solution given by 
\begin{align}
 M_{m,t}=&\int_0^\infty e^{-\tilde\gamma \zeta}\left(\Sigma -e^{-\tilde\gamma t/m}\Sigma e^{- \tilde\gamma^T t/m}\right)e^{-\tilde\gamma^T \zeta}d\zeta.
\end{align}
See, for example, Theorem 6.4.2 in \cite{ortega2013matrix}.

We can bound the $m$-dependent part as follows:
\begin{align}\label{M_limit}
&\|\int_0^\infty e^{-\tilde\gamma \zeta}e^{-\tilde\gamma t/m}\Sigma e^{- \tilde\gamma^T t/m}e^{-\tilde\gamma^T \zeta}d\zeta\|\\
\leq & e^{-2\lambda t/m}\|\Sigma\| \int_0^\infty e^{-2\lambda \zeta}d\zeta=  \frac{\|\Sigma\|}{2\lambda}e^{-2\lambda t/m}.\notag
\end{align}
In particular, it converges to zero as $m\rightarrow 0$ and for any $t>0$
\begin{align}\label{M_limit2}
\lim_{m\rightarrow 0} M_{m,t}=\int_0^\infty  e^{-\tilde\gamma \zeta}\Sigma e^{-\tilde\gamma^T \zeta}d\zeta \equiv M
\end{align}
 where $M $ satisfies the Lyapunov equation
\begin{align}
\tilde\gamma M +M \tilde\gamma^T=\Sigma.
\end{align}

$M $ is positive-definite with
\begin{align}
k\cdot M  k\geq& \mu\int_0^\infty \|e^{-\tilde\gamma^T\zeta}k\|^2d\zeta \geq \mu \|k\|^2\int_0^\infty e^{-2\|\gamma\| \zeta}d\zeta\\
=&\frac{\mu }{2\|\gamma\|}\|k\|^2.\notag
\end{align}
Here we used the fact that the symmetric part of $\tilde \gamma^T$ has eigenvalues bounded above by $\|\gamma\|$. Therefore the eigenvalues of $M $ are bounded below by $\mu/ (2\|\gamma\|)$ and
\begin{align}\label{M_bounds}
\frac{\mu}{2\|\gamma\|}\leq \|M \|\leq \frac{\|\Sigma\|}{2\lambda}.
\end{align}

We are now in a position to prove the following:
\begin{lemma}\label{state_ind_lemma}
Define
\begin{align}
\tilde H=\frac{1}{(2\pi)^{n/2} |\det M |^{1/2}} \int \tilde h(z) e^{-z\cdot M ^{-1}z/2}dz,
\end{align}
where $\tilde h$ is a continuous function satisfying $|\tilde h(z)|\leq \tilde K(1+\|z\|^p)$ for some $p>0$, $\tilde K>0$. Then for any $t>0$, $q>0$ there exists $C>0$, $m_0>0$ such that whenever  $0<m\leq m_0$ we have
\begin{align}
|\tilde H_t^m-\tilde H|\leq C \tilde Ke^{-2\lambda t/m}.
\end{align}
 $C$ and $m_0$ depend only on $t$, $q$, $n$, $p$, $\mu$, $\lambda$, and upper bounds on $\|\Sigma\|$ and  $\|\gamma\|$.
\end{lemma}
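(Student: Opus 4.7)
The random variable inside $\tilde h$ is a centered Gaussian with covariance $M_{m,t}$, so
\begin{align*}
\tilde H_t^m - \tilde H = \int_{\mathbb{R}^n}\tilde h(z)\left[p_{M_{m,t}}(z)-p_M(z)\right]dz,
\end{align*}
where $p_A$ denotes the density of the centered Gaussian with covariance $A$. The matrix estimate $\|M_{m,t}-M\|\leq \frac{\|\Sigma\|}{2\lambda}e^{-2\lambda t/m}$ established in \eqref{M_limit} already encodes the exponential decay we want, so the task reduces to converting this covariance-level perturbation into a pointwise density estimate that can be integrated against the polynomially-bounded $\tilde h$.

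The strategy is to linearly interpolate between the two covariances by setting $M_s=(1-s)M+sM_{m,t}$ for $s\in[0,1]$ and apply the fundamental theorem of calculus, $p_{M_{m,t}}(z)-p_M(z)=\int_0^1\frac{d}{ds}p_{M_s}(z)\,ds$. Using the standard derivative identities $\frac{d}{ds}\log\det M_s=\operatorname{tr}(M_s^{-1}\dot M_s)$ and $\frac{d}{ds}M_s^{-1}=-M_s^{-1}\dot M_s M_s^{-1}$ with $\dot M_s=M_{m,t}-M$ gives
\begin{align*}
\frac{d}{ds}p_{M_s}(z)=\tfrac{1}{2}p_{M_s}(z)\left[z^{T}M_s^{-1}(M_{m,t}-M)M_s^{-1}z-\operatorname{tr}\left(M_s^{-1}(M_{m,t}-M)\right)\right].
\end{align*}
Combined with the matrix estimate above, this produces the pointwise bound
\begin{align*}
|p_{M_{m,t}}(z)-p_M(z)|\leq C_1\, e^{-2\lambda t/m}(1+\|z\|^2)\int_0^1 p_{M_s}(z)\,ds,
\end{align*}
provided the operator norms $\|M_s^{-1}\|$ can be controlled uniformly in $s\in[0,1]$ and $m\leq m_0$.

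This uniform positive-definiteness of the interpolant is the one real technical point. From \eqref{M_bounds}, the eigenvalues of $M$ are bounded below by $\mu/(2\|\gamma\|)$, so $\|M^{-1}\|\leq 2\|\gamma\|/\mu$. Since $\|M_s-M\|\leq\|M_{m,t}-M\|$ decays exponentially in $m$, choosing $m_0$ small enough (depending only on $t,\lambda,\mu,\|\gamma\|,\|\Sigma\|$) to guarantee $\|M_{m,t}-M\|\leq \mu/(4\|\gamma\|)$ makes the eigenvalues of $M_s$ uniformly at least $\mu/(4\|\gamma\|)$, hence $\|M_s^{-1}\|\leq 4\|\gamma\|/\mu$ throughout the interpolation. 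With this in hand the constant $C_1$ above depends only on $n,\mu,\lambda,\|\Sigma\|,\|\gamma\|$.

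To finish, I would multiply the density estimate by $|\tilde h(z)|\leq\tilde K(1+\|z\|^p)$, integrate in $z$, and interchange the $s$- and $z$-integrals (Fubini), reducing matters to bounding $\int_{\mathbb{R}^n}(1+\|z\|^{p+2})p_{M_s}(z)\,dz$. By the change of variables $z=M_s^{1/2}w$ and the uniform two-sided control on the eigenvalues of $M_s$, these Gaussian moments are bounded uniformly in $s\in[0,1]$ and $0<m\leq m_0$ by a constant depending only on $n,p,\mu,\lambda,\|\Sigma\|,\|\gamma\|$. The resulting estimate has exactly the claimed form $|\tilde H_t^m-\tilde H|\leq C\tilde K e^{-2\lambda t/m}$, and the main obstacle throughout is not any single calculation but the careful bookkeeping needed to ensure that every constant respects the prescribed parameter dependence.
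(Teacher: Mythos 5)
Your interpolation argument is correct, and it takes a genuinely different route from the paper. The paper writes the difference of the two Gaussian densities as $AB-CD=A(B-D)+(A-C)D$ with $A=(\det M)^{-1/2}$, $B=e^{-z\cdot M^{-1}z/2}$, etc., estimates $B-D$ by the mean value theorem (using the resolvent identity $M^{-1}_{m,t}-M^{-1}=M^{-1}(M-M_{m,t})M^{-1}_{m,t}$), and estimates the determinant difference $A-C$ by a telescoping product of eigenvalues together with the Minimum--Maximum Principle, $|e_j-f_j|\le\|M-M_{m,t}\|$. You instead set $M_s=(1-s)M+sM_{m,t}$ and differentiate $p_{M_s}$ in $s$ via Jacobi's formula and $\frac{d}{ds}M_s^{-1}=-M_s^{-1}\dot M_s M_s^{-1}$, which handles the determinant and the exponent in one stroke and eliminates the eigenvalue-perturbation bookkeeping; the price is that you must verify uniform positive definiteness of the interpolants $M_s$, which you do correctly by choosing $m_0$ (depending only on $t,\lambda,\mu,\|\Sigma\|,\|\gamma\|$) so that $\|M_{m,t}-M\|\le\mu/(4\|\gamma\|)$, exactly paralleling the paper's use of \req{M_limit} and \req{M_bounds} to get \req{M_mt_bounds}. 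Both routes yield $|\tilde H_t^m-\tilde H|\le C\tilde K e^{-2\lambda t/m}$ with constants having the prescribed dependence (the Gaussian moments under $p_{M_s}$ are uniformly bounded since the eigenvalues of $M_s$ lie in $[\mu/(4\|\gamma\|),\|\Sigma\|/\lambda]$, and Tonelli justifies the interchange of the $s$- and $z$-integrals); your version is arguably the cleaner one, while the paper's is more elementary in that it needs no matrix calculus identities.
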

\begin{proof}
Let us fix $t > 0$. To estimate
\begin{align}
\int\tilde{h}(z)\left[(\det M)^{-{1 \over 2}}\exp\left(-{1 \over 2}z\cdot M^{-1}z\right) - (\det M_{m,t})^{-{1 \over 2}}\exp\left(-{1 \over 2}z\cdot M_{m,t}^{-1}z\right)\right]\,dz
\end{align}
let us rewrite the expression in square brackets as
\begin{align}
AB - CD = A\left(B - D\right) + \left(A - C\right)D.
\end{align}
Since the eigenvalues of $M$ are bounded  above by ${\|\Sigma\|\over 2\lambda}$ and below by ${\mu \over 2\|\gamma\|}$, we have $\|M\|\leq {\|\Sigma\|\over 2\lambda}$ and $\|M^{-1}\| \le {2 \|\gamma\| \over \mu}$.  Thus, by \req{M_limit2}, for sufficiently small $m_0$, depending only on the constants listed in the statement of the Lemma, for all  $0<m\leq m_0$:
\begin{align}\label{M_mt_bounds}
\|M_{m,t}\|\leq{\|\Sigma\|\over\lambda}, \hspace{2mm}\|M_{m,t}^{-1}\| \le {3 \|\gamma\| \over \mu}.
\end{align}
  It follows that
\begin{align}
&\|M^{-1}_{m,t} - M^{-1}\| = \|M^{-1}\left(M - M_{m,t}\right)M^{-1}_{m,t}\|\\
 \leq& {6 \|\gamma\|^2 \over \mu^2}\|M - M_{m,t}\| \le {6 \|\gamma\|^2 \over \mu^2}{\|\Sigma\| \over 2\lambda}e^{-{2\lambda t \over m}}\notag
\end{align}
where in the last step we used \req{M_limit}.  We now estimate $B - D$ using the Mean Value Theorem, together with the lower bound on the quadratic forms $z\cdot M^{-1}z$ and  $z\cdot M_{m,t}^{-1}z$, which follows from \req{M_bounds} and \req{M_mt_bounds}:
\begin{align}
&\left|\exp\left(-{1 \over 2}z\cdot M^{-1}z\right) - \exp\left(-{1 \over 2}z\cdot M_{m,t}^{-1}z\right)\right|\\
 \leq& {1 \over 2}\|M^{-1} - M^{-1}_{m,t}\| \|z\|^2 \exp\left(-{\lambda \over 2\|\Sigma\|}\|z\|^2\right)\notag
\end{align}
obtaining
\begin{align}&(\det M)^{-{1 \over 2}}\left|\int \tilde{h}(z) \left(\exp\left(-{1 \over 2}z\cdot M^{-1}z\right) - \exp\left(-{1 \over 2}z\cdot M_{m,t}^{-1}z\right)\right)\right|\\
\leq&\left({\mu \over 2\|\gamma\|}\right)^{-{n \over 2}}{3 \|\gamma\|^2 \|\Sigma\| \over 2 \mu^2\lambda}\int \tilde K\left(1 + \|z\|^p\right)\|z\|^2\exp\left(-{1 \over 2}{\lambda \over \|\Sigma\|}\|z\|^2\right)\,dz \cdot e^{-{2\lambda t \over m}}.\notag
\end{align}
On the right-hand side all the factors preceding $e^{-{2\lambda t \over m}}$ are constants which depend only on the parameters listed in the statement of the Lemma.

To estimate the term containing $\left(A - C\right)D$ in the original integral, we first write
\begin{align}
&\left|\left(\det M\right)^{-{1 \over 2}} - \left(\det M_{m,t}\right)^{-{1 \over 2}}\right| = {\left|\left(\det M\right)^{{1 \over 2}} - \left(\det M_{m,t}\right)^{{1 \over 2}}\right| \over \left|\det M\right|^{1 \over 2}\left|\det M_{m,t}\right|^{1 \over 2}} \\
=&{\left|\det M - \det M_{m,t}\right| \over \left|\det M\right|^{1 \over 2}\left|\det M_{m,t}\right|^{1 \over 2}\left(\left(\det M\right)^{{1 \over 2}} + \left(\det M_{m,t}\right)^{{1 \over 2}}\right)}.\notag
\end{align}
The denominator is bounded below by $2\left({\mu \over 3\|\gamma\|}\right)^{3n/2}$. Denoting the eigenvalues of $M$ by $e_1 \le \dots \le e_n$ and those of $M_{m,t}$ by $f_1 \le \dots \le f_n$, we have by the Minimum-Maximum Principle
\begin{align}
|e_j - f_j| \le \|M - M_{m,t}\| \le {\|\Sigma\| \over 2\lambda}e^{-{2\lambda t \over m}}.
\end{align}
The difference of determinants
\begin{align}
\det M - \det M_{m,t} = e_1 \dots e_n - f_1 \dots f_n
\end{align}
can be expanded in a telescopic sum
\begin{align}
&[e_1e_2\dots e_n - f_1 e_2 \dots e_n] + [f_1e_2 \dots e_n - f_1f_2e_3\dots e_n] + \dots  \\
&+ [f_1\dots f_{j-1}e_j\dots e_n - f_1\dots f_{j-1}f_j e_{j+1}\dots e_n] \notag\\
&+[f_1f_2\dots f_{n-1}e_n - f_1f_2\dots f_{n-1}f_n]\notag
\end{align}
in which each of the $n$ terms is bounded in absolute value by $\frac{1}{2}\left(\frac{\|\Sigma\|}{ \lambda}\right)^ne^{-\frac{2\lambda t}{ m}}$, so that
\begin{align}
\int\left|\tilde{h}(z)\right|\left|\left(\det M\right)^{-{1 \over 2}} - \left(\det M_{m,t}\right)^{-{1 \over 2}}\right| \exp\left(-{1 \over 2}z\cdot M_{m,t}^{-1}z\right)\,dz
\end{align}
is bounded above by
\begin{align}
\frac{n}{4}\left(\|\Sigma\| \over \lambda\right)^n\left(3\|\gamma\| \over \mu\right)^{3n/2}\int \tilde K\left(1 + \|z\|^p\right)\exp\left(-{1 \over 2}{\lambda \over \|\Sigma\|}\|z\|^2\right)\,dz \cdot e^{-{2\lambda t \over m}}
\end{align}
where again all factors preceding $e^{-{2\lambda t \over m}}$ are constants which depend only on the parameters listed in the statement of the Lemma. 
\end{proof}

This completes the proof that, in the time and state-independent case with zero forcing, the rescaled momentum converges to a Gaussian in the small mass limit.  While illustrative, this simplified case leaves out a large part of the full derivation;  here we did not have to consider the interplay of the position and scaled momentum processes. In the next section we show how, via a sequence of simplifications, we can reduce the general case to the one treated here.

\subsection{A Sequence of Simplifications}\label{sec:simp_seq}

To reduce the general case to the one considered in the previous section, we will derive a sequence of approximations, $z_{j,t}^m$, to the scaled velocity, $z_{0,t}^m\equiv z_t^m$.  These will be approximations in the sense that $E[\|z_{j-1,t}^m-z_{j,t}^m\|^{p_j}]^{1/{p_j}}\to 0$ as $m\to 0$ for some $p_j\geq 1$.

  We will call these processes `simplifications' or `reductions' of $z_t^m$, the idea being that, for the purpose of computing the limiting joint distribution, they can be used in place of $z_t^m$. 

The end result of these (seven) reductions will be the processes
\begin{align}\label{simp7_preview}
z^m_{7,t}\equiv \frac{1}{{m}^{3/2}}\int_{t-m^\kappa}^te^{-\tilde\gamma(t,q_{t-m^\kappa})(t-s)/m} \tilde \gamma(t,q_{t-m^\kappa}) \sigma(t,q_{t-m^\kappa}) (W_t-W_s) ds,
\end{align}
where $\kappa\in (0,1)$. Note that $z_{7,t}^m$ only depends on the $q$ and $W$ processes through $q_{t-m^\kappa}$ and $W_t-W_s$ for $s\in[t-m^\kappa,t]$. These processes are independent, and so we will have effectively reduced the problem to the time and state-independent case, allowing us to use Lemma \ref{state_ind_lemma}.

The intuitive aim behind each simplified process we define below is to show that, in the manner described above, for small $m$ the processes $z_t^m$ are `essentially'  determined by only the  current position and an independent Wiener process. In this light, it is not surprising that the  limiting joint distribution is Gaussian in the scaled momentum variables.
  
{\bf Simplification 1: Reduction to}\\
\begin{align}\label{simp1}
z^m_{1,t}\equiv \frac{1}{\sqrt{m}} \Phi^m_{t} \int_0^{t}(\Phi^m_s)^{-1} \sigma(s,q^m_s) dW_s.
\end{align}
As our first simplification, we show that the initial condition and forcing terms do not contribute in the limit $m\to 0$.

To begin, let $p\geq 1$ and use  \req{fund_sol_decay}, \req{u_sol}, \req{z_def}, and Assumption \ref{assump:bounds} to compute 
\begin{align}
&E\left[\left\|z^m_t-\frac{1}{\sqrt{m}} \Phi^m_{t} \int_0^{t}(\Phi^m_s)^{-1} \sigma(s,q^m_s) dW_s\right\|^p\right]^{1/p}\\
=&E\left[\left\|\frac{1}{\sqrt{m}}\Phi^m_t\left(u^m_0+\int_0^t (\Phi^m_s)^{-1}F(s,q^m_s) ds\right)\right\|^p\right]^{1/p}\notag\\
\leq &E\left[ \frac{1}{\sqrt{m}}\left(\|\Phi^m_t\| \|u^m_0\|+\int_0^t \|\Phi^m_t(\Phi^m_s)^{-1}\| \|F(s,q^m_s)\| ds\right)\right]\notag\\
\leq &\frac{1}{\sqrt{m}}\left(e^{-\lambda t/m}E[\|u^m_0\|]+\|F\|_\infty\int_0^t e^{-\lambda (t-s)/m}ds\right)\notag\\
\leq &C^{1/2}e^{-\lambda t/m}+\frac{\sqrt{m}\|F\|_\infty}{  \lambda} (1-e^{-\lambda t/m})  =O(m^{1/2})\notag
\end{align}
as $m\rightarrow 0$.  Therefore  $E[\|z_t^m-z_{1,t}^m\|^p]^{1/p}=O(m^{1/2})$.  Hence, the forcing, $F$, and initial condition, $u_0^m$, plays no role in the limiting distribution.

{\bf Simplification 2: Reduction to}
\begin{align}\label{simp2}
z_{2,t}^m\equiv \frac{1}{m^{3/2}}\Phi^m_t\int_0^t(\Phi^m_s)^{-1} \tilde \gamma(s,q^m_s) \int_s^t \sigma(r,q^m_r) dW_rds.
\end{align}
Next we use the stochastic convolution decomposition \req{convol_decomp} to rewrite $z_{1,t}^m$, \req{simp1}, and show that the first term in that formula is also negligible in the limit.  To that end, we have defined $z_{2,t}^m$ to consist of the second term in the decomposition.

Utilizing the Burkholder-Davis-Gundy inequality  (see, for example, Theorem 3.28 in \cite{karatzas2014brownian}), for $p>1$ we obtain
\begin{align}
 &E\left[\left\|\frac{1}{\sqrt{m}}\Phi^m_t \int_0^t \sigma(s,q^m_s) dW_s\right\|^p\right]^{1/p}\\
\leq &\frac{1}{\sqrt{m}} e^{-\lambda t/m}E\left[\left \|\int_0^t \sigma(s,q^m_s) dW_s\right\|^p\right]^{1/p}\notag\\
\leq &\frac{\tilde C}{\sqrt{m}} e^{-\lambda t/m}E\left[ \left(\int_0^t \|\sigma(s,q^m_s)\|^2 ds\right)^{p/2}\right]^{1/p}\notag\\
\leq &\frac{\tilde C}{\sqrt{m}} e^{-\lambda t/m}\|\sigma\|_\infty t^{1/2}=O(m^{1/2})\notag
\end{align}
as $m\rightarrow 0$. The constant $\tilde C$, independent of $m$, comes from the use of the Burkholder-Davis-Gundy inequality.  Therefore $E[\|z_{1,t}^m-z_{2,t}^m\|^p]^{1/p}=O(m^{1/2})$ as desired.

{\bf Simplification 3: Reduction to}\\
\begin{align}\label{simp3}
z^m_{3,t}\equiv \frac{1}{m^{3/2}}\Phi^m_t\int_0^t(\Phi^m_s)^{-1} \tilde \gamma(s,q_s) \int_s^t \sigma(r,q_r) dW_r.
\end{align}
We now show that the process $q_t^m$ in $z_{2,t}^m$, \req{simp2}, can be replaced by $q_t$, the solution to the limiting SDE, \req{q_SDE}.

 The following computation uses the Burkholder-Davis-Gundy inequality, Minkowski's inequality for integrals, and H\"older's inequality to show that we can  replace $q_t^m$ with $q_t$ in \req{simp2}.    In the following, $p\geq 2$ and $\tilde C$ is a constant that potentially varies line to line.
\begin{align}
&E\left[\left\|\frac{1}{m^{3/2}}\Phi^m_t\int_0^t(\Phi^m_s)^{-1} \tilde \gamma(s,q^m_s) \int_s^t \sigma(r,q^m_r) dW_rds\right.\right.\\
&\left.\left.-\frac{1}{m^{3/2}}\Phi^m_t\int_0^t(\Phi^m_s)^{-1} \tilde \gamma(s,q_s) \int_s^t \sigma(r,q_r) dW_rds\right\|^p\right]^{1/p}\notag\\
\leq&E\left[\left\|\frac{1}{m^{3/2}}\Phi^m_t\int_0^t(\Phi^m_s)^{-1} \tilde \gamma(s,q^m_s) \int_s^t \left(\sigma(r,q^m_r) - \sigma(r,q_r)\right) dW_r ds\right\|^p\right]^{1/p}\notag\\
&+E\left[\left\|\frac{1}{m^{3/2}}\Phi^m_t\int_0^t(\Phi^m_s)^{-1} \left(\tilde \gamma(s,q^m_s) -\tilde \gamma(s,q_s)\right) \int_s^t \sigma(r,q_r) dW_r ds\right\|^p\right]^{1/p}\notag\\
\leq&\frac{1}{m^{3/2}}E\left[\left(\int_0^t e^{-\lambda(t-s)/m}\|\tilde \gamma\|_\infty \left\|\int_s^t \left(\sigma(r,q^m_r) - \sigma(r,q_r)\right) dW_r\right\| ds\right)^p\right]^{1/p}\notag\\
&+\frac{1}{m^{3/2}}E\left[\left(\int_0^t e^{-\lambda(t-s)/m} \|\tilde \gamma(s,q^m_s) -\tilde \gamma(s,q_s)\| \left\|\int_s^t \sigma(r,q_r) dW_r\right\| ds\right)^p\right]^{1/p}\notag\\
\leq&\frac{\|\tilde \gamma\|_\infty}{m^{3/2}} \int_0^t e^{-\lambda(t-s)/m} E\left[ \left\|\int_s^t \left(\sigma(r,q^m_r) - \sigma(r,q_r)\right) dW_r\right\|^p \right]^{1/p}ds\notag\\
&+\frac{1}{m^{3/2}}\int_0^t e^{-\lambda(t-s)/m} E\left[\|\tilde \gamma(s,q^m_s) -\tilde \gamma(s,q_s)\|^p \left\|\int_s^t \sigma(r,q_r) dW_r\right\|^p\right]^{1/p} ds\notag\\
\leq&\frac{\|\tilde \gamma\|_\infty\tilde C}{m^{3/2}} \int_0^t e^{-\lambda(t-s)/m} E\left[ \left(\int_s^t\|q^m_r-q_r\|^2 dr\right)^{p/2} \right]^{1/p}ds\notag\\
&+\frac{\tilde C}{m^{3/2}}\int_0^t e^{-\lambda(t-s)/m} E\left[\|q^m_s -q_s\|^{2p}\right]^{1/(2p)}E\left[ \left\|\int_s^t \sigma(r,q_r) dW_r\right\|^{2p}\right]^{1/(2p)} \!\!\!\!\!ds\notag\\
\leq&\frac{\|\tilde \gamma\|_\infty\tilde C}{m^{3/2}} \int_0^t e^{-\lambda(t-s)/m}  \left(\int_s^tE\left[ \|q^m_r-q_r\|^p\right]^{2/p}dr\right)^{1/2}ds\notag\\
&+\frac{\tilde C \|\sigma\|_\infty}{m^{3/2}}\int_0^t e^{-\lambda(t-s)/m}(t-s)^{1/2} ds\sup_{0\leq s\leq t} E\left[\|q^m_s -q_s\|^{2p}\right]^{1/(2p)}\notag\\
\leq&\frac{\|\tilde \gamma\|_\infty\tilde C}{m^{3/2}} \int_0^t e^{-\lambda(t-s)/m}  (t-s)^{1/2}ds\sup_{0\leq r\leq t}E\left[ \|q^m_r-q_r\|^p\right]^{1/p}\notag\\
&+\frac{\tilde C \|\sigma\|_\infty}{m^{3/2}}\int_0^t e^{-\lambda(t-s)/m}(t-s)^{1/2} ds\sup_{0\leq s\leq t} E\left[\|q^m_s -q_s\|^{2p}\right]^{1/(2p)}\notag
\end{align}
\begin{align}
\leq&\tilde C \int_0^{t/m} e^{-\lambda r}  r^{1/2}dr\left(\sup_{0\leq r\leq t}E\left[ \|q^m_r-q_r\|^p\right]^{1/p}+ \sup_{0\leq r\leq t}E\left[\|q^m_r -q_r\|^{2p}\right]^{1/(2p)}\right)   \notag\\
=&O(m^{1/2})\notag
\end{align}
as $m\rightarrow 0$, by \req{results_summary1}.  Therefore $E[\|z_{2,t}^m-z_{3,t}^m\|^p]^{1/p}=O(m^{1/2})$ as desired.

{\bf Simplification 4: Reduction to}\\
\begin{align}\label{simp4}
z^m_{4,t}\equiv \frac{1}{m^{3/2}}\Phi^m_t\int_0^t(\Phi^m_s)^{-1} \tilde \gamma(s,q_s)\sigma(s,q_s) (W_t-W_s) ds.
\end{align}
Next we show that, due to the fast decay of  $\Phi_t^m(\Phi_s^m)^{-1}$ for $s<t$, the inner (stochastic) integral, $\int_s^t \sigma(r,q^m_r) dW_r$, in $z^m_{3,t}$, \req{simp3}, can be approximated by $\sigma(s,q_s) (W_t-W_s)$.

First, compute the bound
\begin{align}
&\left\|\frac{1}{m^{3/2}}\Phi^m_t\int_0^t(\Phi^m_s)^{-1} \tilde \gamma(s,q_s)\int_s^t \sigma(r,q_r) dW_r ds\right.\\
&\left.-\frac{1}{m^{3/2}} \Phi^m_t\int_0^t(\Phi^m_s)^{-1} \tilde \gamma(s,q_s) \sigma(s,q_s) (W_t-W_s) ds\right\|\notag\\
=&\left\|\frac{1}{m^{3/2}}\Phi^m_t\int_0^t(\Phi^m_s)^{-1} \tilde \gamma(s,q_s) \int_s^t \left(\sigma(r,q_r) -\sigma(s,q_s)\right)  dW_r ds\right\|\notag\\
\leq & \frac{1}{m^{3/2}} \|\tilde \gamma\|_\infty \int_0^t  e^{-\lambda (t-s)/m} \left\|\int_s^t \left(\sigma(r,q_r) -\sigma(s,q_s)\right)  dW_r\right\| ds.\notag
\end{align}
Then, using Burkholder-Davis-Gundy inequality and the Minkowski inequality for integrals, for $p\geq 2$ we compute a bound on the $L^p$-norm 
\begin{align}
&E\left[\left\|\frac{1}{m^{3/2}}\Phi^m_t\int_0^t(\Phi^m_s)^{-1} \tilde \gamma(s,q_s)\int_s^t \sigma(r,q_r) dW_r ds\right.\right.\\
&\left.\left.- \frac{1}{m^{3/2}}\Phi^m_t\int_0^t(\Phi^m_s)^{-1} \tilde \gamma(s,q_s) \sigma(s,q_s) (W_t-W_s) ds\right\|^p\right]^{1/p}\notag\\
\leq & \frac{1}{m^{3/2}} \|\tilde \gamma\|_\infty E\left[ \left(\int_0^t e^{-\lambda (t-s)/m} \left\|\int_s^t \left(\sigma(r,q_r) -\sigma(s,q_s) \right) dW_r\right\| ds\right)^p\right]^{1/p}\notag\\
\leq & \frac{1}{m^{3/2}}\|\tilde \gamma\|_\infty   \int_0^tE\left[ \left(e^{-\lambda (t-s)/m}\left\|\int_s^t \left(\sigma(r,q_r) -\sigma(s,q_s) \right) dW_r\right\| \right)^p\right]^{1/p}ds\notag\\
\leq &\frac{1}{m^{3/2}}\tilde C   \int_0^t  e^{-\lambda (t-s)/m} E\left[  \left(\int_s^t \|\sigma(r,q_r) -\sigma(s,q_s)\|^2 dr\right)^{p/2}\right]^{1/p}ds\notag
\end{align}
\begin{align}
\leq &\frac{1}{m^{3/2}}\tilde C   \int_0^t  e^{-\lambda (t-s)/m} E\left[  \left(\int_s^t (r-s)^2+ \|q_r -q_s\|^2 dr\right)^{p/2}\right]^{1/p}ds\notag\\
\leq & \frac{1}{m^{3/2}}\tilde C   \int_0^t  e^{-\lambda (t-s)/m}  \left((t-s)^{3/2}+  E\left[ \left(\int_s^t\|q_r -q_s\|^2 dr\right)^{p/2}\right]^{1/p}\right)ds\notag\\
\leq & \frac{1}{m^{3/2}}\tilde C   \int_0^t  e^{-\lambda (t-s)/m}\left((t-s)^{3/2}+\left( \int_s^t E\left[  \|q_r -q_s\|^p\right]^{2/p} dr \right)^{1/2}\right)ds\notag\\
\leq & \frac{1}{m^{3/2}}\tilde C   \int_0^t  e^{-\lambda (t-s)/m}\left((t-s)^{3/2}+ (t-s)^{1/2}\sup_{s\leq r\leq t} E\left[  \|q_r -q_s\|^p\right]^{1/p} \right) ds\notag\\
= & \tilde C   \int_0^{t/m}  e^{-\lambda u} u^{1/2} \left(mu+\sup_{t-mu\leq r\leq t}E\left[  \|q_r -q_{t-mu}\|^p\right]^{1/p}\right)  du\notag\\
= & O(m)+ \tilde C   \int_0^{t/m}  e^{-\lambda u} u^{1/2} \sup_{t-mu\leq r\leq t}E\left[  \|q_r -q_{t-mu}\|^p\right]^{1/p}  du.\notag
\end{align}

 By Lemma \ref{q_cont_lemma}, for any $0\leq s\leq t$ we have
\begin{align}
\sup_{s\leq r\leq t}E\left[  \|q_r -q_s\|^p\right]^{1/p}\leq\tilde C\left((t-s)+\left(t-s \right)^{1/2}\right)
\end{align}
where $\tilde C$ depends only on $p$, $n$, and the drift vector field and diffusion matrix of the SDE for $q_t$.  Therefore
\begin{align}
&E\left[\left\|\frac{1}{{m}^{3/2}}\Phi^m_t\int_0^t(\Phi^m_s)^{-1} \tilde \gamma(s,q_s)\int_s^t \sigma(r,q_r) dW_r ds\right.\right.\\
&\left.\left.-\frac{1}{{m}^{3/2}} \Phi^m_t\int_0^t(\Phi^m_s)^{-1} \tilde \gamma(s,q_s) \sigma(s,q_s) (W_t-W_s) ds\right\|^p\right]^{1/p}\notag\\
\leq &O(m)+ \tilde C   \int_0^{t/m}  e^{-\lambda u} u^{1/2} \tilde C\left(mu+m^{1/2} u^{1/2}\right) du =O(m^{1/2}).\notag
\end{align}
This proves that  $E[\|z_{3,t}^m-z_{4,t}^m\|^p]^{1/p}=O(m^{1/2})$ as desired.

{\bf Simplification 5: Reduction to}\\
\begin{align}\label{simp5}
z^m_{5,t}\equiv \frac{1}{m^{3/2}}\Phi^m_t\int_{t-m^\kappa}^t(\Phi^m_s)^{-1} \tilde \gamma(s,q_s)\sigma(s,q_s) (W_t-W_s) ds, \hspace{2mm} \kappa\in(0,1).
\end{align}
The fast decay of $\Phi_t^m(\Phi_s^m)^{-1}$ for $s<t$ makes the majority of the integral from $0$ to $t$ in $z_{4,t}^m$, \req{simp4}, negligible.  More precisely, for $p>1$,  $0<\kappa<1$ we have
\begin{align}
&E\left[\left\|\frac{1}{m^{3/2}} \Phi^m_t\int_0^t(\Phi^m_s)^{-1} \tilde \gamma(s,q_s) \sigma(s,q_s) (W_t-W_s) ds\right.\right.\\
&\left.\left.-\frac{1}{m^{3/2}}\Phi^m_t\int_{t-m^\kappa}^t(\Phi^m_s)^{-1} \tilde \gamma(s,q_s) \sigma(s,q_s) (W_t-W_s) ds  \right\|^p\right]^{1/p}\notag\\
\leq &E\left[\left(\frac{1}{m^{3/2}} \|\tilde \gamma\|_\infty\|\sigma\|_\infty\int_0^{t-m^\kappa}e^{-\lambda (t-s)/m} \|W_t-W_s\| ds  \right)^p\right]^{1/p}\notag\\
\leq &E\left[\left(\frac{1}{m^{3/2}} \|\tilde \gamma\|_\infty\|\sigma\|_\infty e^{-\lambda m^{\kappa-1}}t \sup_{0\leq s\leq t}\|W_t-W_s\|  \right)^p\right]^{1/p}\notag\\
\leq &\frac{2}{m^{3/2}} \|\tilde \gamma\|_\infty\|\sigma\|_\infty e^{-\lambda m^{\kappa-1}}tE\left[ \sup_{0\leq s\leq t}\|W_s\|  ^p\right]^{1/p}=O(m^q)\notag
\end{align}
as $m\rightarrow0$ for any $q>0$. In particular, $E[\|z_{4,t}^m-z_{5,t}^m\|^p]^{1/p}=O(m^{1/2})$ as desired.

{\bf Simplifiction 6: Reduction to}\\
\begin{align}\label{simp6}
z_{6,t}^m\equiv \frac{1}{{m}^{3/2}}\Phi^m_t\int_{t-m^\kappa}^t(\Phi^m_s)^{-1} \tilde \gamma(t,q_{t-m^\kappa}) \sigma(t,q_{t-m^\kappa}) (W_t-W_s) ds.
\end{align}
(H\"older) continuity of $q_t$, both pathwise and in an $L^p$ sense, allows us to replace $q_s$ with $q_{t-m^\kappa}$ in $z_{5,t}^m$, \req{simp5}. 

 Using  Lemma \ref{q_cont_lemma} along with the Minkowski inequality for integrals and H\"older's inequality, for $p>1$ we obtain
\begin{align}
&E\left[\left\|\frac{1}{m^{3/2}}\Phi^m_t\int_{t-m^\kappa}^t(\Phi^m_s)^{-1} \tilde \gamma(s,q_s) \sigma(s,q_s) (W_t-W_s) ds  \right.\right.\\
&\left.\left.-\frac{1}{m^{3/2}}\Phi^m_t\int_{t-m^\kappa}^t(\Phi^m_s)^{-1} \tilde \gamma(t,q_{t-m^\kappa}) \sigma(t,q_{t-m^\kappa}) (W_t-W_s) ds  \right\|^p\right]^{1/p}\notag\\
\leq &\frac{\tilde C}{m^{3/2}}E\left[\left(\int_{t-m^\kappa}^te^{-\lambda(t-s)/m} (t-s+\|q_s-q_{t-m^\kappa}\|) \|W_t-W_s\| ds   \right)^p\right]^{1/p}\notag\\
\leq &\frac{\tilde C}{m^{3/2}}\int_{t-m^\kappa}^te^{-\lambda(t-s)/m}E\left[ (t-s+\|q_s-q_{t-m^\kappa}\|)^p \|W_t-W_s\|^p\right]^{1/p} ds   \notag\\
\leq &\frac{\tilde C}{m^{3/2}}\int_{t-m^\kappa}^te^{-\lambda(t-s)/m}E\left[ (t-s+\|q_s-q_{t-m^\kappa}\|)^{2p}\right]^{1/(2p)}E\left[ \|W_t-W_s\|^{2p}\right]^{1/(2p)} \!\!ds.   \notag
\end{align}

The rightmost expected value can be computed 
\begin{align}
&E\left[\|W_t-W_s\|^{2p}\right]=\frac{(t-s)^{p}}{(2\pi)^{n/2}}\int \|u\|^{2p}e^{-\|u\|^2/2}du
\end{align}
and therefore
\begin{align}
&E\left[\left\|\frac{1}{m^{3/2}}\Phi^m_t\int_{t-m^\kappa}^t(\Phi^m_s)^{-1} \tilde \gamma(s,q_s) \sigma(s,q_s) (W_t-W_s) ds  \right.\right.\\
&\left.\left.-\frac{1}{m^{3/2}}\Phi^m_t\int_{t-m^\kappa}^t(\Phi^m_s)^{-1} \tilde \gamma(t,q_{t-m^\kappa}) \sigma(t,q_{t-m^\kappa}) (W_t-W_s) ds  \right\|^p\right]^{1/p}\notag\\
\leq &\frac{\tilde C}{m^{3/2}}\int_{t-m^\kappa}^te^{-\lambda(t-s)/m}(t-s)^{1/2}\left (t-s+E\left[\|q_s-q_{t-m^\kappa}\|^{2p}\right]^{1/(2p)}\right) ds \notag\\
\leq &\frac{\tilde C}{m^{3/2}}\int_{t-m^\kappa}^te^{-\lambda(t-s)/m}(t-s)^{1/2}\left (t-s+m^\kappa+m^{\kappa/2}\right) ds \notag\\
= &\tilde C\int_0^{m^{-(1-\kappa)}}e^{-\lambda r} r^{1/2}\left (mr+m^\kappa+m^{\kappa/2}\right) dr=O(m^{\kappa/2}). \notag
\end{align}
Hence $E[\|z_{5,t}^m-z_{6,t}^m\|^p]^{1/p}=O(m^{\kappa/2})$ as desired. Note that the convergence rate bound has weakened from $O(m^{1/2})$ to $O(m^{\kappa/2})$ with $\kappa\in (0,1)$.

{\bf Simplifiction 7: Reduction to}\\
\begin{align}\label{simp7}
z^m_{7,t}\equiv \frac{1}{{m}^{3/2}}\int_{t-m^\kappa}^te^{-\tilde\gamma(t,q_{t-m^\kappa})(t-s)/m} \tilde \gamma(t,q_{t-m^\kappa}) \sigma(t,q_{t-m^\kappa}) (W_t-W_s) ds.
\end{align}
The fast decay of $\Phi_t^m(\Phi_s^m)^{-1}$ as $s$ gets far from $t$ also means that we will be able to replace $\Phi_t^m(\Phi_s^m)^{-1}$ with $e^{-\tilde\gamma(t,q_{t-m^\kappa})(t-s)/m}$ in $z_{6,t}^m$, \req{simp6}.

 Using Lemma \ref{fund_matrix_diverg} along with the Minkowski inequality for integrals and H\"older's inequality, for $p>1$ we obtain the bounds
\begin{align}
&E\left[\left\|\frac{1}{m^{3/2}}\Phi^m_t\int_{t-m^\kappa}^t(\Phi^m_s)^{-1} \tilde \gamma(t,q_{t-m^\kappa}) \sigma(t,q_{t-m^\kappa}) (W_t-W_s) ds  \right.\right.\\
&\left.\left.-\frac{1}{m^{3/2}}\int_{t-m^\kappa}^te^{-\tilde\gamma(t,q_{t-m^\kappa})(t-s)/m} \tilde \gamma(t,q_{t-m^\kappa}) \sigma(t,q_{t-m^\kappa}) (W_t-W_s) ds  \right\|^p\right]^{1/p}\notag\\
\leq &\frac{\|\tilde \gamma  \sigma\|_\infty}{m^{3/2}}E\left[\left(\int_{t-m^\kappa}^t\left\|\Phi^m_t(\Phi^m_s)^{-1}-e^{-\tilde\gamma(t,q_{t-m^\kappa})(t-s)/m}\right\|  \left\|W_t-W_s\right\| ds    \right)^p\right]^{1/p}\notag\\
\leq &\frac{\|\tilde \gamma  \sigma\|_\infty}{m^{3/2}}E\left[\left(\int_{t-m^\kappa}^t\int_s^t\| \tilde\gamma(r,q_{r}^m)/m-\tilde\gamma(t,q_{t-m^\kappa})/m\|dre^{-\lambda (t-s)/m} \left\|W_t-W_s\right\| ds    \right)^p\right]^{1/p}\notag\\
\leq &\frac{\tilde C\|\tilde \gamma  \sigma\|_\infty}{m^{5/2}}E\left[\left(\int_{t-m^\kappa}^t\int_s^t\left( t-r +\|q_{r}^m-q_{t-m^\kappa}\|\right)dre^{-\lambda (t-s)/m} \left\|W_t-W_s\right\| ds    \right)^p\right]^{1/p}\notag\\
\leq &\frac{\tilde C\|\tilde \gamma  \sigma\|_\infty}{m^{5/2}} \int_{t-m^\kappa}^t e^{-\lambda (t-s)/m}\int_s^tE\left[\left( t-r +\|q_{r}^m-q_{t-m^\kappa}\|\right)^p \left\|W_t-W_s\right\|^p\right]^{1/p} drds  \notag\\
\leq &\frac{\tilde C\|\tilde \gamma  \sigma\|_\infty}{m^{5/2}} \int_{t-m^\kappa}^t \bigg(e^{-\lambda (t-s)/m} E\left[ \left\|W_t-W_s\right\|^{2p}\right]^{1/(2p)}\notag\\ 
&\hspace{2.5cm}\times \int_s^tE\left[\left( t-r +\|q_{r}^m-q_{t-m^\kappa}\|\right)^{2p}\right]^{1/(2p)}dr\bigg)ds.  \notag
\end{align}

Therefore, using Lemma \ref{q_cont_lemma}, 
\begin{align}
&E\left[\left\|\frac{1}{m^{3/2}}\Phi^m_t\int_{t-m^\kappa}^t(\Phi^m_s)^{-1} \tilde \gamma(t,q_{t-m^\kappa}) \sigma(t,q_{t-m^\kappa}) (W_t-W_s) ds  \right.\right.\\
&\left.\left.-\frac{1}{m^{3/2}}\int_{t-m^\kappa}^te^{-\tilde\gamma(t,q_{t-m^\kappa})(t-s)/m} \tilde \gamma(t,q_{t-m^\kappa}) \sigma(t,q_{t-m^\kappa}) (W_t-W_s) ds  \right\|^p\right]^{1/p}\notag\\
\leq &\frac{\tilde C}{m^{5/2}} \int_{t-m^\kappa}^t e^{-\lambda (t-s)/m}(t-s)^{1/2} \left( (t-s)^2 +\int_s^tE\left[\|q_{r}^m-q_r\|^{2p}\right]^{1/(2p)}\right.\notag\\
&\left.+E\left[\|q_{r}-q_{t-m^\kappa}\|^{2p}\right]^{1/(2p)}dr\right)ds  \notag\\
\leq &\frac{\tilde C}{m^{5/2}} \int_{t-m^\kappa}^t e^{-\lambda (t-s)/m}(t-s)^{3/2} \left( (t-s) +\sup_{s\leq r\leq t}E\left[\|q_{r}^m-q_r\|^{2p}\right]^{1/(2p)}\right.\notag\\
&\left.+\sup_{s\leq r\leq t}E\left[\|q_{r}-q_{t-m^\kappa}\|^{2p}\right]^{1/(2p)}\right)ds  \notag\\
\leq &\frac{\tilde C}{m^{5/2}} \int_{t-m^\kappa}^t e^{-\lambda (t-s)/m}(t-s)^{3/2} \left( (t-s) +\sup_{s\leq r\leq t}E\left[\|q_{r}^m-q_r\|^{2p}\right]^{1/(2p)}\right.\notag\\
&\left.+(t-(t-m^\kappa))+(t-(t-m^\kappa))^{1/2}\right)ds  \notag\\
= &\tilde C \int_0^{m^{-(1-\kappa)}} e^{-\lambda r}r^{3/2} \left( mr +m^\kappa+m^{\kappa/2}\right)dr  \notag\\
&+\tilde C\int_0^{m^{-(1-\kappa)} } e^{-\lambda r}r^{3/2} dr \sup_{s\leq r\leq t}E\left[\|q_{r}^m-q_r\|^{2p}\right]^{1/(2p)}\notag\\
=&O(m^{\kappa/2})+O(m^{1/2})=O(m^{\kappa/2}).\notag
\end{align}
Therefore $E[\|z_{6,t}^m-z_{7,t}^m\|^p]^{1/p}=O(m^{\kappa/2})$ as desired.

\subsection{Dependence of $z_{7,t}^m$ on the Processes $q$ and $W$ }
$z_{7,t}^m$, \req{simp7}, is the last in our sequence of simplified processes, and the process we will use in our computation of the small mass limit of the Fourier transform. It will be useful later on to write $z_{7,t}^m$ as
\begin{align}\label{z_G_formula}
&z^m_{7,t}= G_t^m(q_{t-m^\kappa}, W^{m,t}),
\end{align}
where we have defined the Wiener processes
\begin{align}\label{W_mt_def}
 W^{m,t}\equiv (W_{\cdot+(t-m^\kappa)}-W_{t-m^\kappa})|_{[0,1]}
\end{align}
 and the map $G_t^m:\mathbb{R}^n\times C([0,1],\mathbb{R}^n)\rightarrow \mathbb{R}^n$,
\begin{align}\label{G_mt_def}
 G_t^m(q,y)=\frac{1}{m^{3/2}} \int_{t-m^\kappa}^{t} e^{-\tilde \gamma(t,q)(t-s)/m}\tilde \gamma(t,q)\sigma(t,q)( y(m^\kappa)-y(s-(t-m^\kappa)))ds
\end{align}
(for simplicity, we take $m\leq 1$ so that $y(s)$ need only be defined for $s\in[0,1]$).  Note that $G_t^m$ is jointly continuous in $(q,y)$ and satisfies the following estimates
\begin{align}\label{G_bound}
\|G_t^m(q,y)\|\leq&\frac{\|\tilde \gamma\sigma\|_\infty}{m^{3/2}} \int_{t-m^\kappa}^t e^{-\lambda(t-s)/m} \|y(m^\kappa)-y(s-(t-m^\kappa))\|ds\\
\leq &\frac{\|\tilde \gamma\sigma\|_\infty}{\lambda m^{1/2}} \sup_{t-m^\kappa\leq s\leq t}\|y(m^\kappa)-y(s-(t-m^\kappa))\|\notag
\end{align}
and
\begin{align}\label{G_q_Lipschitz}
&\| G_t^m(q,y)-G_t^m(\tilde q,y)\|\\
\leq& \frac{1}{m^{3/2}} \int_{t-m^\kappa}^t\|e^{-\tilde \gamma(t,q)(t-s)/m}\tilde \gamma(t,q)\sigma(t,q)-e^{-\tilde \gamma(t,\tilde q)(t-s)/m}\tilde \gamma(t,\tilde q)\sigma(t,\tilde q)\|ds\notag\\
&\times \sup_{t-m^\kappa\leq s\leq t}\|y(m^\kappa)-y(s-(t-m^\kappa))\|\notag\\
\leq&\bigg( \frac{\| \gamma\sigma\|_\infty}{m^{3/2}} \int_{t-m^\kappa}^t\|e^{-\tilde \gamma(t,q)(t-s)/m} -e^{-\tilde \gamma(t,\tilde q)(t-s)/m}\|ds\notag\\
& +\frac{\tilde C}{m^{3/2}} \int_{t-m^\kappa}^te^{-\lambda(t-s)/m}\|q-\tilde q\|ds\bigg) \sup_{t-m^\kappa\leq s\leq t}\|y(m^\kappa)-y(s-(t-m^\kappa))\|\notag\\
\leq& \left(\frac{\| \gamma\sigma\|_\infty}{m^{3/2}} \int_{t-m^\kappa}^t(t-s) \|\tilde \gamma(t,\tilde q)/m-\tilde \gamma(t,q)/m\| e^{-\lambda (t-s)/m}ds+\frac{\tilde C}{\lambda m^{1/2}}\|q-\tilde q\| \right)\notag\\
&\times\sup_{t-m^\kappa\leq s\leq t}\|y(m^\kappa)-y(s-(t-m^\kappa))\|\notag\\
\leq& \left(\tilde C\| \gamma\sigma\|_\infty  \int_0^\infty r e^{-\lambda r}dr +\frac{\tilde C}{\lambda }\right)m^{-1/2}\|q-\tilde q\| \sup_{t-m^\kappa\leq s\leq t}\|y(m^\kappa)-y(s-(t-m^\kappa))\|.\notag
\end{align}

\subsection{The Limiting Distribution}
We now have the tools to compute the small mass limit of the distributions of $Y^m=(J^m,q^m_{t_1},...,q^m_{t_N},z_{t_1}^m,...,z_{t_N}^m)$ by computing the limit of the Fourier transforms.

\begin{theorem}\label{thm:conv_dis}
Let
\begin{align}
M(t,q)=\int_0^\infty  e^{-\tilde\gamma(t,q) \zeta}\Sigma(t,q) e^{-\tilde\gamma^T(t,q) \zeta}d\zeta,
\end{align}
$Q_{J,t}=(J,q_{t_1},...q_{t_N})$, $\mu_{J,t}$ be the distribution of $Q_{J,t}$,
\begin{align}\label{F_m_def}
\mathcal{F}^m(k)\equiv E[\exp(i  k\cdot Y^m)],
\end{align}
and
\begin{align}
\mathcal{F}(k)=E\bigg[&  \exp\left(ik_0\cdot Q_{J,t}-\sum_{j=1}^Nk_j\cdot M(t_j,q_{t_j})k_j/2\right)\bigg],
\end{align}
 where  $k\equiv (k_0,k_1,...,k_N)$, $k_0\in\mathbb{R}^{d+Nn}$, and $k_j\in \mathbb{R}^n$ for $j=1,...,N$.

  Then
\begin{align}
&\lim_{m\rightarrow 0}\mathcal{F}^m(k)=\mathcal{F}(k),
\end{align}
and hence the distributions of  $(J^m,q^m_{t_1},...,q^m_{t_N},z_{t_1}^m,...,z_{t_N}^m)$ converge weakly to 
\begin{align}
d\nu=&  \left(\prod_{j=1}^N\frac{1}{(2\pi)^{n/2} |\det M(t_j,q_j)|^{1/2}}\exp\left[-z_j\cdot M^{-1}(t_j,q_j)z_j/2\right]dz_j\right)\\
&\times \mu_{J,t}(dJ,dq_1,...,dq_N).\notag
\end{align}
\end{theorem}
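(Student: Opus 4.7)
\bigskip
\noindent\textbf{Proof proposal.} The plan is to establish pointwise convergence $\mathcal{F}^m(k)\to\mathcal{F}(k)$ of the characteristic functions and then to invoke L\'evy's continuity theorem; the limit $\mathcal{F}(k)$ is the characteristic function of the claimed measure $d\nu$ by the standard Gaussian Fourier-transform formula. As a first step, apply the elementary inequality $|e^{ia}-e^{ib}|\leq|a-b|$ componentwise together with Assumptions \ref{assump:conv} and \ref{assump:J} (giving $J^m\to J$ in $L^2$ and $q^m_{t_j}\to q_{t_j}$ in every $L^p$) to replace the $(J^m,q^m_{t_j})$ entries of $Y^m$ inside the Fourier transform by $(J,q_{t_j})$ at the cost of $o(1)$. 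Then telescope the seven $L^p$-bounds from Section \ref{sec:simp_seq} to get $E[\|z^m_{t_j}-z^m_{7,t_j}\|^p]^{1/p}=O(m^{\kappa/2})$, and replace $z^m_{t_j}$ by $z^m_{7,t_j}$ in the Fourier transform, again at cost $o(1)$. Finally, by Lemma \ref{q_cont_lemma}, $q_{t_j}$ can also be replaced in the outer argument by $q_{t_j-m^\kappa}$ at cost $O(m^{\kappa/2})$.

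\smallskip
The main analytic input is the representation $z^m_{7,t_j}=G^m_{t_j}(q_{t_j-m^\kappa},W^{m,t_j})$ from \req{z_G_formula}. For $m$ small enough, the intervals $[t_j-m^\kappa,t_j]$ are disjoint, so the shifted Wiener paths $W^{m,t_j}$ are mutually independent and each independent of $\mathcal{F}_{t_j-m^\kappa}$. Applying Lemma \ref{state_ind_lemma} at ``time'' $m^\kappa$, with $\tilde{\gamma},\sigma$ replaced by $\tilde{\gamma}(t_j,q),\sigma(t_j,q)$ and with $\tilde h(z)=e^{ik_j\cdot z}$, and noting that all constants in Assumption \ref{assump:bounds} are uniform in $(t,q)$, yields
\begin{equation*}
E\bigl[\exp\bigl(ik_j\cdot G^m_{t_j}(q,W^{m,t_j})\bigr)\bigr]=\exp\!\Bigl(-\tfrac12 k_j\cdot M(t_j,q)k_j\Bigr)+O\!\bigl(e^{-2\lambda m^{\kappa-1}}\bigr),
\end{equation*}
uniformly in $q$; since $\kappa<1$ the error term decays faster than any power of $m$.

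\smallskip
To assemble these single-time identities into the joint limit, the plan is to iterate the tower property, conditioning at step $j$ on $\mathcal{F}_{t_j-m^\kappa}$ so that $q_{t_j-m^\kappa}$ is measurable while $W^{m,t_j}$ is independent of the conditioning. I expect the \emph{main obstacle} to be that the ``outside'' Fourier factor $\exp(ik_0\cdot(J,q_{t_1-m^\kappa},\ldots,q_{t_N-m^\kappa}))\prod_{k<j}\exp(ik_k\cdot z^m_{7,t_k})$ is not itself $\mathcal{F}_{t_j-m^\kappa}$-measurable: $J$ may depend on the full Wiener path, and for $k>j$, $q_{t_k-m^\kappa}$ depends on the increment $W^{m,t_j}$ through the SDE \req{q_SDE}. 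The first difficulty is handled by replacing $J$ with $\hat J^m=E[J\mid\mathcal{G}^m]$, where $\mathcal{G}^m$ is generated by $\mathcal{C}$ together with $W_s$ for $s\notin\bigcup_j(t_j-m^\kappa,t_j)$; since the excluded intervals shrink to points and $W$ is continuous a.s., $\mathcal{G}^m$ increases to $\mathcal{F}^W_\infty\vee\mathcal{C}$, so martingale convergence gives $\hat J^m\to J$ in $L^2$, and $|e^{ia}-e^{ib}|\leq|a-b|$ allows the substitution in the Fourier transform. The second difficulty is handled by replacing $q_{t_k-m^\kappa}$ (for $k>j$) by the solution of \req{q_SDE} obtained when $W^{m,t_j}$ is replaced by its linear interpolant: standard $L^p$-stability of SDEs under Wiener-path perturbations of size $O(m^{\kappa/2})$ gives the required $O(m^{\kappa/2})$ error.

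\smallskip
After these replacements the factorization is legitimate at each iteration step, each conditional expectation contributes the Gaussian Fourier factor, and the product collapses to
\begin{equation*}
E\Bigl[\exp\bigl(ik_0\cdot(J,q_{t_1-m^\kappa},\ldots,q_{t_N-m^\kappa})\bigr)\prod_{j=1}^N\exp\!\Bigl(-\tfrac12 k_j\cdot M(t_j,q_{t_j-m^\kappa})k_j\Bigr)\Bigr]+o(1).
\end{equation*}
Passing to the limit via $q_{t_j-m^\kappa}\to q_{t_j}$ in $L^p$ (Lemma \ref{q_cont_lemma}), continuity of $M(t,q)$ in $q$, and the uniform bound $|e^{ix}-e^{iy}|\leq|x-y|$ produces $\mathcal{F}(k)$, and L\'evy's theorem upgrades pointwise convergence of the characteristic functions to weak convergence of the joint distributions to $d\nu$.
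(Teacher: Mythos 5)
Your reduction machinery coincides with the paper's (the seven simplifications, the representation \req{z_G_formula}, and the frozen-coefficient Gaussian computation via Lemma \ref{state_ind_lemma}), and the preliminary substitutions $(J^m,q^m_{t_j},z^m_{t_j})\to(J,q_{t_j},z^m_{7,t_j})$ are fine. The genuine gap sits exactly where you flag ``the main obstacle'': the factorization over the windows $[t_j-m^\kappa,t_j]$, and neither of your remedies closes it. First, the $\sigma$-algebra $\mathcal{G}^m$ generated by $\mathcal{C}$ and $\{W_s: s\notin\bigcup_j(t_j-m^\kappa,t_j)\}$ contains $W_{t_j}$ and $W_{t_j-m^\kappa}$, hence the total window increments $W_{t_j}-W_{t_j-m^\kappa}$; so $\hat J^m=E[J\mid\mathcal{G}^m]$ is \emph{not} independent of $W^{m,t_j}$, and conditionally on $\mathcal{G}^m$ the window paths are Brownian bridges rather than Brownian motions (this is fixable by generating $\mathcal{G}^m$ from the gap \emph{increments} instead of the values of $W$, but not as written). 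Second, and more seriously, replacing the window portion of the driver by its linear interpolant does not decouple the future positions from the window at all: the interpolant is a function of the very same endpoint increment, so the modified $q_{t_k-m^\kappa}$, $k>j$, still depend on $W^{m,t_j}$, and given your ``outside'' factor the conditional law of the window path is again a bridge. Your assertion that ``the factorization is legitimate'' therefore needs a further estimate showing $G^m_{t_j}(q,\cdot)$ is asymptotically insensitive to pinning the window endpoint (true, with an $O(m^{(1-\kappa)/2})$ correction, but neither stated nor proved), and the appeal to ``standard $L^p$-stability of SDEs under Wiener-path perturbations'' in sup norm is not a valid general principle for the It\^o equation \req{q_SDE}; the specific $O(m^{\kappa/2})$ bound would have to be derived by hand. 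A minor additional point: Lemma \ref{state_ind_lemma} concerns the Ornstein--Uhlenbeck convolution $\frac{1}{\sqrt m}e^{-\tilde\gamma t/m}\int_0^t e^{\tilde\gamma s/m}\sigma\,dW_s$, not $G^m_t(q,\cdot)$ directly, so you still need the frozen-coefficient reductions to pass between the two before quoting it.

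For contrast, the paper closes this step without modifying $J$ or the future positions: it telescopes one time index $l$ at a time, so that the velocities with index $j>l$ have already been replaced by the \emph{bounded} Gaussian factors $e^{-k_j\cdot M(t_j,q_{t_j})k_j/2}$; the whole non-adapted part of the integrand is then a single bounded random variable $X$, which is replaced first by $E(X\mid\mathcal{F}_{t_l})$ and then by $E(X\mid\mathcal{F}_{t_l-m^\kappa})$, the error being $\int_{t_l-m^\kappa}^{t_l}f_s\,dW_s\to 0$ in $L^2$ by the martingale representation theorem for the initially enlarged filtration (Lemma \ref{mart_rep_lemma}; this is the reason for Assumption \ref{assump:prob_space}). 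After that replacement the remaining factor is genuinely $\mathcal{F}_{t_l-m^\kappa}$-measurable, the independence of $W^{m,t_l}$ applies exactly as in \req{independence_calc}, and Lemma \ref{state_ind_lemma} together with \req{G_q_Lipschitz} and dominated convergence finishes the step. If you want to salvage your route, the cleanest repair is essentially to adopt this conditioning idea (your martingale-convergence argument for $J$ is the same idea in different clothing); the interpolant construction should be dropped.
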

\begin{proof}

Define
\begin{align}
\mathcal{F}^m_{l}(k)=E\bigg[&  \exp\left(ik_0\cdot Q_{J,t}+i\sum_{j=1}^l k_j\cdot z^m_{t_j}-\sum_{j=l+1}^Nk_j\cdot M(t_j,q_{t_j})k_j/2\right)\bigg],
\end{align}
for $l=0,...,N$.  In particular, $\mathcal{F}(k)=\mathcal{F}^m_{0}(k)$.  First note that
\begin{align}
&|\mathcal{F}^m(k)-\mathcal{F}^m_N(k)|\leq \|k_0\|\left(E\left[\|J-J^m\|^2\right]^{1/2}+\sum_{j=1}^N E\left[\|q_{t_j}-q_{t_j}^m\|\right]\right)=o(1)
\end{align}
as $m\to 0$.

We will now show that 
\begin{align}
\lim_{m\rightarrow 0} |\mathcal{F}^m_{l}(k)-\mathcal{F}^m_{l-1}(k)|=0
\end{align}
for each $l=1,...,N$, which will imply the desired result.

Given $l$, and using the calculations from Section \ref{sec:simp_seq} we have
\begin{align}
& |\mathcal{F}^m_{l}(k)-\mathcal{F}^m_{l-1}(k)|\\
=&\bigg|E\bigg[  \exp\left(ik_0\cdot Q_{J,t}+i\sum_{j=1}^{l-1} k_j\cdot z^m_{t_j}-\sum_{j=l+1}^Nk_j\cdot M(t_j,q_{t_j})k_j/2\right)\notag\\
&\times\left(\exp(ik_l\cdot z^m_{t_l})-\exp(-k_l\cdot M(t_l,q_{t_l})k_l/2)\right)\bigg]\bigg|\notag\\
\leq &o(1)+\left|E\bigg[ X\exp\left(i\sum_{j=1}^{l-1} k_j\cdot z^m_{t_j}\right)\left(e^{ik_l\cdot z^m_{7,t_l}}- e^{-k_l\cdot M(t_l,q_{t_l})k_l/2}\right)\bigg]\right|\notag\\
=&o(1)+\left|E\bigg[E( X|\mathcal{F}_{t_l})\exp\left(i\sum_{j=1}^{l-1} k_j\cdot z^m_{t_j}\right)\left(e^{ik_l\cdot z^m_{7,t_l}}- e^{-k_l\cdot M(t_l,q_{t_l})k_l/2}\right)\bigg]\right|,\notag
\end{align}
where we defined the bounded random variable
\begin{align}
X\equiv  \exp\left(ik_0\cdot Q_{J,t}-\sum_{j=l+1}^Nk_j\cdot M(t_j,q_{t_j})k_j/2\right).
\end{align}

Using Lemma \ref{mart_rep_lemma}, there exists $f\in L^2([0,\infty)\times \Omega,Prog,ds\times P,\mathbb{R}^n)$ such that 
\begin{align}
E(X|\mathcal{F}_t)=E(X|\mathcal{F}_0)+\int_{0}^t f_s dW_s,
\end{align}
and therefore, by the It\^o isometry,
\begin{align}
&E\left[\left|E(X|\mathcal{F}_t)-E(X|\mathcal{F}_{t-m^\kappa})\right|^2\right]\\
=& E\left[\left|\int_{t-m^\kappa}^t f_s dW_s\right|^2\right]= E\left[\int_{t-m^\kappa}^t \|f_s\|^2 ds\right].\notag
\end{align}
This converges to zero as $m\rightarrow 0$ because $f\in L^2(ds\times dP)$, therefore
\begin{align}
& |\mathcal{F}^m_{l,t}(k)-\mathcal{F}^m_{l-1,t}(k)|\\
\leq &o(1)+\left|E\bigg[\tilde X^m\left(e^{ik_l\cdot z^m_{7,t_l}}- e^{-k_l\cdot M(t_l,q_{t_l})k_l/2}\right)\bigg]\right|,\notag
\end{align}
where we defined
\begin{align}
\tilde X^m=E( X|\mathcal{F}_{t_l-m^\kappa})\exp\left[i\sum_{j=1}^{l-1} k_j\cdot z^m_{t_j}\right].
\end{align}
Note that, for $m$ sufficiently small, $\tilde X^m$ is $\mathcal{F}_{t_l-m^\kappa}$-measurable. 

Using \req{z_G_formula}  we have 
\begin{align}
&E\bigg[\tilde X^m e^{ik_l\cdot z^m_{7,t_l}}\bigg]=E\bigg[\tilde X^m e^{ik_l\cdot  G_{t_l}^m(q_{t_l-m^\kappa}, W^{m,t_l})}\bigg]\\
=&\int x e^{ik_l\cdot  G_{t_l}^m(q,y)} (\tilde X^m,q_{t_l-m^\kappa},W^{m,t_l})_*P(dx,dq,dy)\notag.
\end{align}
where $(\tilde X^m,q_{t_l-m^\kappa},W^{m,t_l})_*P$ denotes the pushforward measure i.e. the distribution of $(\tilde X^m,q_{t_l-m^\kappa},W^{m,t_l})$ under the probability measure $P$ (see Assumption \ref{assump:prob_space} for discussion of the assumptions made about the probability space).

$(\tilde X^m,q_{t_l-m^\kappa})$ is $\mathcal{F}_{t_l-m^\kappa}$-measurable and $W^{m,t_l}$ is a Wiener process that is independent of $\mathcal{F}_{t_l-m^\kappa}$. Therefore 
\begin{align}\label{independence_calc}
(\tilde X^m,q_{t_l-m^\kappa},W^{m,t_l})_*P=(\tilde X^m,q_{t_l-m^\kappa})_*P\times  W^{m,t_l}_*P\equiv (\tilde X^m,q_{t_l-m^\kappa})_*P\times\mu,
\end{align}
 where $\mu$ is the Wiener measure on the path space $C([0,1],\mathbb{R}^n)$ (note that $\mu$ does not depend on $m$).

We can now write
\begin{align}
&E\bigg[\tilde X^m e^{ik_l\cdot z^m_{7,t_l}}\bigg]=E\left[ \tilde X^m \int  e^{ik_l\cdot  G_{t_l}^m(q_{t_l-m^\kappa},y)} \mu(dy) \right],
\end{align}
and hence
\begin{align}\label{delta_F}
& |\mathcal{F}^m_{l,t}(k)-\mathcal{F}^m_{l-1,t}(k)|\\
\leq &o(1)+\left|E\bigg[\tilde X^m\left( \int  e^{ik_l\cdot  G_{t_l}^m(q_{t_l-m^\kappa},y)} \mu(dy) - e^{-k_l\cdot M(t_l,q_{t_l})k_l/2}\right)\bigg]\right|.\notag\\
\leq &o(1)+E\bigg[\left| \int  e^{ik_l\cdot  G^m_{t_l}(q_{t_l-m^\kappa},y)} \mu(dy) - e^{-k_l\cdot M(t_l,q_{t_l})k_l/2}\right|\bigg]\notag\\
\leq &o(1)+E\bigg[\int \left|  e^{ik_l\cdot  G^m_{t_l}(q_{t_l-m^\kappa},y)} - e^{ik_l\cdot  G^m_{t_l}(q_{t_l},y)} \right| \mu(dy)\bigg]\notag\\
&+E\bigg[\left| \int  e^{ik_l\cdot  G^m_{t_l}(q_{t_l},y)} \mu(dy) - e^{-k_l\cdot M(t_l,q_{t_l})k_l/2}\right|\bigg]\notag\\
\leq &o(1)+\|k_l\|E\bigg[\int \left\|  G^m_{t_l}(q_{t_l-m^\kappa},y) -G^m_{t_l}(q_{t_l},y)\right\| \mu(dy)\bigg]\notag\\
&+E\bigg[\left| \int  e^{ik_l\cdot  G^m_{t_l}(q_{t_l},y)} \mu(dy) - e^{-k_l\cdot M(t_l,q_{t_l})k_l/2}\right|\bigg]\notag.
\end{align}

Using \req{G_q_Lipschitz}, Lemma \ref{q_cont_lemma}, and the Burkholder-Davis-Gundy inequality  we have
\begin{align}\label{E_delta_G}
&E\bigg[\int \left\|  G^m_{t_l}(q_{t_l-m^\kappa},y) -G^m_{t_l}(q_{t_l},y)\right\| \mu(dy)\bigg]\\
\leq & \left(\tilde C\| \gamma\sigma\|_\infty  \int_0^\infty r e^{-\lambda r}dr +\frac{\tilde C}{\lambda }\right)m^{-1/2}E\left[ \|q_{t_l-m^\kappa}-q_{t_l}\|\right]\notag\\
&\times\int  \sup_{t_l-m^\kappa\leq s\leq t_l}\|y(m^\kappa)-y(s-(t_l-m^\kappa))\| \mu(dy)\notag\\
\leq &\tilde C m^{-1/2}(m^\kappa+m^{\kappa/2}) E\left[  \sup_{t_l-m^\kappa\leq s\leq t_l}\|W_{t_l}-W_s\| \right]\notag\\
\leq &\tilde C m^{-1/2}(m^\kappa+m^{\kappa/2}) m^{\kappa/2}.\label{E_delta_G2}
\end{align}

If we choose $\kappa\in(1/2,1)$ then \req{E_delta_G2} is $o(1)$ and hence
\begin{align}\label{delta_F2}
& |\mathcal{F}^m_{l,t}(k)-\mathcal{F}^m_{l-1,t}(k)|\\
\leq &o(1)+E\bigg[\left| \int  e^{ik_l\cdot  G^m_{t_l}(q_{t_l},y)} \mu(dy) - e^{-k_l\cdot M(t_l,q_{t_l})k_l/2}\right|\bigg]\notag.
\end{align}

The expression inside the expected value in \req{delta_F2} is bounded by $2$, and so if we can show that
\begin{align}\label{fix_q_result}
\lim_{m\rightarrow 0}  \int  e^{ik\cdot  G_t^m(q,y)} \mu(dy)=e^{-k\cdot M(t,q)k/2}
\end{align}  
for every $q,k\in\mathbb{R}^n$ then  the result will follow from the dominated convergence theorem.

Following the calculations of Section \ref{sec:simp_seq} again, now with time and state-independent $\sigma=\sigma(t,q)$ and $\tilde \gamma=\tilde\gamma(t,q)$ ($t,q$ fixed) we find that
\begin{align}
\lim_{m\rightarrow 0}  \bigg|&\int  e^{ik\cdot  G^m_t(q,y)} \mu(dy)\\
&- E\left[\exp\left( \frac{1}{\sqrt{m}}   ik\cdot e^{-\tilde \gamma(t,q) t/m}\int  e^{\tilde \gamma(t,q)s/m}\sigma(t,q) dW_s\right)\right]\bigg|=0.\notag
\end{align}
  The state-independent result, Lemma \ref{state_ind_lemma},  implies that
\begin{align}
&\lim_{m\rightarrow 0} E\left[\exp\left( \frac{1}{\sqrt{m}}  ik\cdot e^{-\tilde \gamma(t,q) t/m}\int  e^{\tilde \gamma(t,q)s/m}\sigma(t,q) dW_s\right)\right]\\
=& \frac{1}{(2\pi)^{n/2}|\det M(t,q)|^{1/2}}\int e^{ik\cdot z} e^{-z\cdot M(t,q)^{-1}z/2}dz\notag\\
=&e^{-k\cdot M(t,q)k/2}\notag
\end{align}  
for each $t$ and $q$.  This completes the proof.

\end{proof}

\begin{corollary}\label{corr:fluc_dis}
When a fluctuation dissipation relation holds pointwise for a time and position dependent ``temperature" $T(t,q)$, i.e.
\begin{align}\label{fluc_dis}
\Sigma_{ij}(t,q)=2k_BT(t,q) \gamma_{ij}(t,q),
\end{align}
then
\begin{align}
M(t,q)=k_BT(t,q)\equiv 1/\beta(t,q),
\end{align}
and so the limiting distribution is
\begin{align}
d\nu = \left(\prod_{j=1}^N\left(\frac{\beta(t_j,q_j)}{2\pi}\right)^{n/2}\exp\left[- \beta(t_j,q_j) \|z_j\|^2/2\right]dz_j\right) \mu_{J,t}(dJ,dq_1,...,dq_N).
\end{align}
\end{corollary}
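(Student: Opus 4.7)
The plan is to evaluate the matrix $M(t,q)$ defined in \req{M_def} under the fluctuation-dissipation assumption and then substitute the resulting expression into the general limiting density given in Theorem \ref{thm:conv_dis}. I would carry this out in two short steps.

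First, I would identify $M$ using the Lyapunov equation rather than the integral definition. Recall from \req{tilde_gamma_def} that $\tilde\gamma = \gamma + A$ where $A_{ij}(t,q) \equiv \partial_{q^j}\psi_i - \partial_{q^i}\psi_j$ is antisymmetric, so $\tilde\gamma^T = \gamma - A$ and
\begin{align}
\tilde\gamma + \tilde\gamma^T = 2\gamma.
\end{align}
As derived in Section~\ref{sec:state_ind}, $M$ is the unique solution of the Lyapunov equation $\tilde\gamma M + M \tilde\gamma^T = \Sigma$, uniqueness following from the positivity of the real parts of the eigenvalues of $\tilde\gamma$. Plugging \req{fluc_dis} into the right-hand side gives
\begin{align}
\tilde\gamma M + M \tilde\gamma^T = 2 k_B T(t,q)\, \gamma(t,q) = k_B T(t,q)\,(\tilde\gamma + \tilde\gamma^T).
\end{align}
By inspection $M = k_B T(t,q)\, I$ solves this equation, and by uniqueness it is the only solution; hence $M(t,q) = (1/\beta(t,q))\, I$.

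Second, I would plug this expression for $M$ into the density $d\nu$ of Theorem~\ref{thm:conv_dis}. With $M(t_j,q_j) = (1/\beta(t_j,q_j))\,I$, we have $\det M(t_j,q_j) = \beta(t_j,q_j)^{-n}$, $M^{-1}(t_j,q_j) = \beta(t_j,q_j)\, I$, and $z_j \cdot M^{-1}(t_j,q_j) z_j = \beta(t_j,q_j)\|z_j\|^2$, so
\begin{align}
\frac{1}{(2\pi)^{n/2} |\det M(t_j,q_j)|^{1/2}} \exp\!\left[-\tfrac{1}{2} z_j \cdot M^{-1}(t_j,q_j) z_j\right] = \left(\frac{\beta(t_j,q_j)}{2\pi}\right)^{\!n/2}\!\!\exp\!\left[-\tfrac{1}{2}\beta(t_j,q_j)\|z_j\|^2\right].
\end{align}
Taking the product over $j=1,\ldots,N$ and integrating against $\mu_{J,t}$ yields the stated formula for $d\nu$.

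The only nontrivial point is the first step, and it is not really an obstacle so much as an observation: the antisymmetry of the magnetic contribution to $\tilde\gamma$ ensures that its symmetric part is precisely $\gamma$, so the fluctuation-dissipation relation turns the source term of the Lyapunov equation into a symmetrized version of $\tilde\gamma$ itself, making the scalar multiple of the identity an obvious (and, by uniqueness, the only) solution. Everything else is algebraic simplification.
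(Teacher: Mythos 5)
Your proof is correct and is essentially the computation the paper intends: the corollary is stated as an immediate consequence of Theorem \ref{thm:conv_dis}, with $M(t,q)=k_BT(t,q)I$ following exactly from the Lyapunov equation $\tilde\gamma M+M\tilde\gamma^T=\Sigma$ (unique by the positive lower bound $\lambda$ on the real parts of the eigenvalues of $\tilde\gamma$) together with the observation that the vector-potential part of $\tilde\gamma$ is antisymmetric so that $\tilde\gamma+\tilde\gamma^T=2\gamma$. The substitution into the Gaussian density is routine and matches the paper's stated formula.
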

\noindent As stated earlier in Section \ref{sec:summary}, here we recognize the Gibbs measure for the $z$-variables, and hence can interpret this result as expressing an instantaneous equilibration of the scaled momentum variables (in particular, of the kinetic energy)  in the limit $m\to 0$.

\section{A Stronger Convergence Result when $N=1$}\label{sec:conv_rate}
When $N=1$, and under stronger assumptions on $J^m$ and $J$, the estimates from Section \ref{sec:limit_dist} will allow us to prove convergence of $E[h(Y^m)]$ as $m\rightarrow 0$ for a wider class of functions than just bounded continuous ones (in which case convergence is guaranteed by weak convergence of the distributions of the $Y^m$). This will also provide a bound on the convergence rate.  

Extending the class of functions in this way is significant as there are important physical quantities, such as the kinetic energy, that are not bounded functions of  $z_t^m$. This is relevant for the study of entropy production \cite{Chetrite2008,gawedzki2013fluctuation,Birrell_entropy}.

\subsection{Strategy for Bounding the Convergence Rate}
 Our goal is now to consider the $m\to 0$ limit of the processes
\begin{align}
Y^m_t=(J^m_t,q^m_t,z^m_t)
\end{align}
for $t>0$,  where $J_t^m$ are continuous, $\mathbb{R}^d$-valued, $\mathcal{F}_t$-adapted processes that share several important properties with $q_t^m$:\\
\begin{assumption}\label{assump:J_t}
We assume that for any $T>0$, $p>0$ we have
\begin{align}\label{J_t_limit}
 \sup_{t\in[0,T]}E\left[\|J_t^m-J_t\|^p\right]^{1/p}=O(m^{1/2}) 
\end{align}
as $m\rightarrow 0$, where $J_t$ is also a continuous,  $\mathcal{F}_t$-adapted process. We also assume that $J_t$ has the same boundedness property as $q_t$:
\begin{align}\label{J_t_Lp_bound}
E\left[\sup_{t\in[0,T]}\|J_t\|^p\right]^{1/p}<\infty
\end{align}
for all $T>0$, $p>0$, as well as the same $L^p$-continuity property (see Lemma \ref{q_cont_lemma}):

For any $T>0$, $p>0$ there exists $\tilde C>0$ such that for any $0\leq s\leq t\leq T$  we have
\begin{align}\label{J_t_cont_assump}
E\left[  \|J_t -J_s\|^p\right]^{1/p}\leq\tilde C\left((t-s)+\left(t-s \right)^{1/2}\right).
\end{align}
\end{assumption}
As discussed in Assumption \ref{assump:J}, we still have in mind processes such as $J_t^m=\int_0^t g(r,q_r^m)dr$, but are purposely general about the nature of the processes $J_t^m$ and $J_t$ here.

Fix $q,K>0$ and let $h:\mathbb{R}^{d+2n}\rightarrow\mathbb{C}$ be any  $C^1$ function that satisfies
\begin{align}\label{nabla_h_bound}
\|\nabla h(y)\|\leq K(1+\|y\|^q) 
\end{align}
and consider
\begin{align}
H_t^m\equiv E[h( Y_t^m)].
\end{align}

Similarly to Section \ref{sec:simp_seq}, we will compute $\lim_{m\rightarrow 0} H_t^m$ by showing that, if it exists, it is equal to the limits of a sequence of related quantities of the form $H^m_{l,t}\equiv E[h( Y^m_{l,t})]$.  Eventually we will arrive at a reduced form for which we can compute the limit explicitly.  The following lemma will be key to all of these reduction steps.  The intuition behind what we do here is the same as in Section \ref{sec:simp_seq}, but now we need to be more careful about the dependence on constants, hence the reason for our extra precision.

\begin{lemma}\label{reduction_lemma}
Let  $h:\mathbb{R}^{\tilde k}\rightarrow\mathbb{C}$ be any  $C^1$ function whose first derivative is polynomially bounded (\req{nabla_h_bound}). Suppose we have families of random variables $\tilde  Y^m$ and  $\hat  Y^m$ and some $\delta>0$ such that for every $p>0$
\begin{align}
E[\|\tilde Y^m\|^p]^{1/p}=O(1),  \hspace{2mm} E[ \|\tilde Y^m-\hat {Y}^m\|^p]^{1/p}=O(m^\delta),
\end{align}
as $m\rightarrow 0$.  Then $E[h(\tilde Y^m)]$ and $E[h(\hat Y^m)]$ exist for all sufficiently small $m$,
\begin{align}\label{dh}
E[\|\hat Y^m\|^p]^{1/p}=O(1), \text{ and  }\, |E[h(\tilde Y^m)]- E[h(\hat Y^m)]|=KO(m^\delta)
\end{align}
as $m\rightarrow 0$, where the implied constant in \req{dh} is independent of $K$ and of the choice of $h$ satisfying \req{nabla_h_bound}.
\end{lemma}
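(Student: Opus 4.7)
The plan is to use the fundamental theorem of calculus on the segment joining $\hat Y^m$ and $\tilde Y^m$, together with H\"older's inequality and the hypothesized $L^p$ bounds. There is no deep obstacle; the only real subtlety is tracking the constants carefully so that the final estimate is genuinely proportional to $K$ with a constant that does not depend on $h$ itself.

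First I would establish the $L^p$ bound on $\hat Y^m$. By Minkowski's inequality,
\begin{align}
E[\|\hat Y^m\|^p]^{1/p}\leq E[\|\tilde Y^m\|^p]^{1/p}+E[\|\tilde Y^m-\hat Y^m\|^p]^{1/p}=O(1)+O(m^\delta)=O(1)
\end{align}
for every $p>0$. Next, to confirm that $E[h(\tilde Y^m)]$ and $E[h(\hat Y^m)]$ are well defined, I note that the bound \req{nabla_h_bound} on $\nabla h$ together with the identity $h(y)=h(0)+\int_0^1 \nabla h(sy)\cdot y\,ds$ yields the pointwise estimate $|h(y)|\leq |h(0)|+K\|y\|+\frac{K}{q+1}\|y\|^{q+1}$. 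Combined with the $L^p$ bounds on $\tilde Y^m$ and $\hat Y^m$ just established, this gives integrability of $h(\tilde Y^m)$ and $h(\hat Y^m)$ for all sufficiently small $m$.

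The main calculation is then the difference estimate. Writing
\begin{align}
h(\tilde Y^m)-h(\hat Y^m)=\int_0^1 \nabla h\bigl((1-s)\hat Y^m+s\tilde Y^m\bigr)\cdot (\tilde Y^m-\hat Y^m)\,ds,
\end{align}
applying \req{nabla_h_bound}, and using $\|(1-s)\hat Y^m+s\tilde Y^m\|\leq \|\tilde Y^m\|+\|\hat Y^m\|$, I obtain
\begin{align}
|h(\tilde Y^m)-h(\hat Y^m)|\leq K\bigl(1+2^{q-1}(\|\tilde Y^m\|^q+\|\hat Y^m\|^q)\bigr)\|\tilde Y^m-\hat Y^m\|.
\end{align}
Taking expectations and applying H\"older's inequality with conjugate exponents $2,2$ gives
\begin{align}
|E[h(\tilde Y^m)]-E[h(\hat Y^m)]|\leq K\, E\bigl[(1+2^{q-1}(\|\tilde Y^m\|^q+\|\hat Y^m\|^q))^2\bigr]^{1/2}E[\|\tilde Y^m-\hat Y^m\|^2]^{1/2}.
\end{align}
The first factor in the product is $O(1)$ by the $L^{2q}$ bounds on $\tilde Y^m$ and $\hat Y^m$, while the second factor is $O(m^\delta)$ by hypothesis, yielding $|E[h(\tilde Y^m)]-E[h(\hat Y^m)]|=KO(m^\delta)$. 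The implied constant depends on $q$ and on the $O(1)$ and $O(m^\delta)$ bounds for $\tilde Y^m$ and $\tilde Y^m-\hat Y^m$ (equivalently, the associated implied constants in the hypotheses), but not on $K$ or on the particular $h$ satisfying \req{nabla_h_bound}, which is precisely what is claimed.
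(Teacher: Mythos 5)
Your proposal is correct and follows essentially the same route as the paper's proof: Minkowski for the $L^p$ bound on $\hat Y^m$, the fundamental theorem of calculus along the segment joining $\hat Y^m$ and $\tilde Y^m$ together with the polynomial gradient bound for both integrability and the difference estimate, and H\"older's inequality (you fix exponents $2,2$ where the paper keeps general conjugate exponents $p,\tilde p$, which is immaterial since the hypotheses hold for all $p$). The only cosmetic point is the constant $2^{q-1}$ in $(\|\tilde Y^m\|+\|\hat Y^m\|)^q\leq 2^{q-1}(\|\tilde Y^m\|^q+\|\hat Y^m\|^q)$, which is valid for $q\geq 1$ and should be replaced by, e.g., $2^{q}$ (or subadditivity) when $0<q<1$; this does not affect the argument since the implied constant may depend on $q$.
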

\begin{proof}
H\"older's inequality yields
\begin{align}
E[\|\hat{Y}^m\|^p]^{1/p}\leq E[\|\tilde{Y}^m\|^p]^{1/p}+E[\|\hat{Y}^m-\tilde Y^m\|^p]^{1/p}=O(1).
\end{align}
For any $y_1,y_2$ we have
\begin{align}\label{FTCalc}
h(y_2)-h(y_1)=\int_0^1 \nabla h(sy_2+(1-s)y_1)ds \cdot (y_2-y_1),
\end{align}
hence
\begin{align}
E\left[|h(\tilde Y^m)|\right]\leq |h(0)|+KE\left[ (1+\|\tilde Y^m\|^q)\|\tilde Y^m\|\right]=O(1).
\end{align}
Therefore  $E[h(\tilde Y^m)]$ exists for all sufficiently small $m$, and similarly for $E[h(\hat Y^m)]$.

Let $p,\tilde p>1$ be conjugate exponents.  Using \req{FTCalc} and  H\"older's inequality
\begin{align}
&|E[h(\tilde Y^m)]- E[h(\hat Y^m)]|\\
\leq&E\left[ \int_0^1 \|\nabla h(s \tilde Y^m+(1-s)\hat {Y}^m)\| \|\tilde Y^m-\hat {Y}^m\|ds\right]\notag\\
\leq& \int_0^1 E[\|\nabla h(s \tilde Y^m+(1-s)\hat {Y}^m)\|^{\tilde p}]^{1/\tilde p}dsE[ \|\tilde Y^m-\hat {Y}^m\|^p]^{1/p}.\notag
\end{align}
We have
\begin{align}
&\int_0^1 E[\|\nabla h(s \tilde Y^m+(1-s)\hat {Y}^m)\|^{\tilde p}]^{1/\tilde p}ds\\
\leq&K\int_0^1 E[(1+ \|s \tilde Y^m+(1-s)\hat {Y}^m\|^q)^{\tilde p}]^{1/\tilde p}ds\notag\\
\leq& K\left(1+\int_0^1E[\|\tilde Y^m+(1-s)(\hat {Y}^m-
\tilde Y^m)\|^{\tilde pq}]^{1/\tilde p}ds\right)\notag\\
\leq& K\left(1+\tilde C\int_0^1E[\|\tilde Y^m\|^{\tilde pq}]^{1/\tilde p}+(1-s)^qE[\|\hat {Y}^m-
\tilde Y^m\|^{\tilde pq}]^{1/\tilde p}ds\right)\notag\\
\leq& K\left(1+\tilde C\left(E[\|\tilde Y^m\|^{\tilde pq}]^{1/\tilde p}+E[\|\hat {Y}^m-
\tilde Y^m\|^{\tilde pq}]^{1/\tilde p}\right)\right)\notag\\
=&KO(1).\notag
\end{align}
Hence
\begin{align}
|E[h(\tilde Y^m)]- E[h(\hat Y^m)]|\leq KO(1)O(m^\delta).
\end{align}
\end{proof}

For $Y_t^m=(J_{t}^m,q^m_t,z_{t}^m)$, note that Assumptions \ref{assump:conv} and \ref{assump:J_t} imply that  $E[\|\tilde Y_t^m\|^p]^{1/p}=O(1)$  for any $p>0$. Therefore, if we can find $\delta> 0$ and a finite sequence of (``simplified") random variables  $Y^m_{l,t}$, $l=1,...,k$ (and letting $Y^m_{0,t}\equiv Y^m_t$) such that  
\begin{align}\label{q_J_Lp_conv}
 E[ \|Y^m_{l-1,t}-Y^m_{l,t}\|^p]^{1/p}=O(m^\delta)
\end{align}
for all $p>0$, $l=1,...,k$, then  iterating Lemma \ref{reduction_lemma} $k$ times will allow us to conclude that $E[h( Y^m_{t})]$ and $E[h( Y^m_{k,t})]$ exist for all sufficiently small $m$ and
\begin{align}
|E[h( Y_t^m)]-E[h( Y^m_{k,t})]|=KO(m^\delta)
\end{align}
as $m\rightarrow 0$, where the implied constant is independent of $K$ and of $h$. Note that by H\"older's inequality, it suffices to show \req{q_J_Lp_conv} for all $p$ larger than some $p_0$.

If in addition, we can show that $\lim_{m\rightarrow 0} E[h( Y^m_{k,t})]\equiv H_t$ exists
and
\begin{align}
E[h(Y^m_{k,t})]=H_t+KO(m^\delta)  
\end{align}
 then we can further conclude that $\lim_{m\rightarrow 0} E[h( Y_t^m)]=H_t$  and
\begin{align}
 E[h( Y_t^m)]=H_t+KO(m^\delta),
\end{align}
 thereby accomplishing our goal. Note that in the above argument, the implied constant in the big-O notation can be chosen independent of $K$ and $h$. The reason for the careful attention we have paid to the dependence on $K$ and $h$, both here and in the sequel, will become clear as we proceed.

\subsection{The Convergence Rate Bound}

\begin{theorem}\label{thm:conv_dis2}
 Let   $K,q>0$, $0<\delta<1/2$, and $h:\mathbb{R}^{d+2n}\rightarrow\mathbb{C}$ be a  $C^1$ function that satisfies
\begin{align}
\|\nabla h(y)\|\leq K(1+\|y\|^q).
\end{align}
Define
\begin{align}
M(t,q)=\int_0^\infty  e^{-\tilde\gamma(t,q) \zeta}\Sigma(t,q) e^{-\tilde\gamma^T(t,q) \zeta}d\zeta
\end{align}
and
\begin{align}
H_t=E\left[ \frac{1}{(2\pi)^{n/2} |\det M(t,q_t)|^{1/2}} \int  h(J_t,q_t,z) e^{-z\cdot M^{-1}(t,q_t)z/2}dz\right].
\end{align}
 Then, for any $\tilde q>0$,
\begin{align}
&E\left[h(J_t^m,q_t^m,z_t^m)\right]=H_t+KO(m^{\delta})+h(0)O(m^{\tilde q})
\end{align}
as $m\rightarrow 0$, where the implied constants are independent of $K$ and $h$.
\end{theorem}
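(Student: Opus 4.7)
The plan is to combine the $L^p$-estimates from Section \ref{sec:simp_seq} with Lemma \ref{reduction_lemma}, applied iteratively, to replace $(J_t^m, q_t^m, z_t^m)$ by a simpler random vector whose limit can be evaluated explicitly. The terminal simplification will be $z_{7,t}^m = G_t^m(q_{t-m^\kappa}, W^{m,t})$ where $W^{m,t}$ is independent of $\mathcal{F}_{t-m^\kappa}$, which reduces the remaining computation to (a version of) the state-independent Lemma \ref{state_ind_lemma}. Theorem \ref{thm:conv_dis} handled $h(y)=e^{ik\cdot y}$ via a martingale-representation argument for bounded test functions; here the novelty is to replace that step by a direct $L^p$-propagation using Lemma \ref{reduction_lemma}, which is exactly why we need the $C^1$, polynomially-bounded-gradient hypothesis.

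Concretely, I would perform the following chain of replacements, applying Lemma \ref{reduction_lemma} at each step. First, $z_t^m \to z_{1,t}^m \to \dots \to z_{7,t}^m$, using the $L^p$-bounds derived in Simplifications 1--7, each of which is $O(m^{1/2})$ or $O(m^{\kappa/2})$ for $\kappa \in (0,1)$. Second, $(J_t^m, q_t^m) \to (J_t, q_t)$ via Assumptions \ref{assump:conv} and \ref{assump:J_t}, at rate $O(m^{1/2})$. Third, replace $(J_t, q_t)$ by $(J_{t-m^\kappa}, q_{t-m^\kappa})$ in the first two slots of $h$, using \req{J_t_cont_assump} and Lemma \ref{q_cont_lemma}, at rate $O(m^{\kappa/2})$. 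Lemma \ref{reduction_lemma} guarantees that each step costs only a $K$-multiplicative error and that the needed $L^p$-moments propagate along the way; the accumulated error from this entire sequence is $K\cdot O(m^{\kappa/2})$.

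At this stage the expression equals $E[h(J_{t-m^\kappa}, q_{t-m^\kappa}, G_t^m(q_{t-m^\kappa}, W^{m,t}))]$ up to $KO(m^{\kappa/2})$. Following the argument near \req{independence_calc}, since $W^{m,t}$ is independent of $\mathcal{F}_{t-m^\kappa}$ and the first two arguments are $\mathcal{F}_{t-m^\kappa}$-measurable, I would condition on $\mathcal{F}_{t-m^\kappa}$ and Fubini to rewrite the expectation as $E[\int h(J_{t-m^\kappa}, q_{t-m^\kappa}, G_t^m(q_{t-m^\kappa}, y))\mu(dy)]$. For fixed $(J, q)$, running the state-independent counterparts of Simplifications 1--7 (all with exponential rates, since $\tilde\gamma, \sigma$ are now constants evaluated at $(t,q)$) converts the inner integral to the quantity $\tilde H_t^m$ of Lemma \ref{state_ind_lemma} applied with $\tilde h(z) = h(J, q, z)$, up to error $O(e^{-2\lambda t/m})$. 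Lemma \ref{state_ind_lemma} then produces the Gaussian integral against the covariance $M(t,q)$, with an error of $\tilde K \cdot O(e^{-2\lambda t/m})$. Reversing the Hölder replacement in the first two slots (one final application of Lemma \ref{reduction_lemma}, cost $KO(m^{\kappa/2})$) identifies the Gaussian expression with $H_t$.

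Choosing $\kappa\in(2\delta,1)$ ensures $m^{\kappa/2}\le Cm^\delta$, so all reduction errors combine into $KO(m^\delta)$. The $h(0)$-term arises solely from the state-independent step: from the gradient bound one has $|h(J,q,z)|\le |h(J,q,0)| + K(1+\|(J,q,z)\|^q)\|z\|$, and iterating, $|h(J,q,0)|\le |h(0)| + K(1+\|(J,q)\|^q)\|(J,q)\|$. Consequently the constant $\tilde K$ entering Lemma \ref{state_ind_lemma} splits into a $|h(0)|$ piece and a $K\cdot O(1)$ piece after taking expectation in $(J_{t-m^\kappa},q_{t-m^\kappa})$, which together with the exponential decay produce the advertised $h(0)O(m^{\tilde q}) + KO(m^{\tilde q})$, the latter being absorbed into $KO(m^\delta)$. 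The main obstacle is the bookkeeping: ensuring every intermediate error is strictly linear in $K$ (with constants independent of $h$), and isolating the single $h(0)$-contribution that survives from the constant-value part of $h$, so that the final implied constants depend only on the structural parameters of the SDE and not on $K$ or $h$.
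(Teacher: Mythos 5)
Your overall skeleton matches the paper's: reduce to $H^m_{7,t}$ of \req{7H_def} via Lemma \ref{reduction_lemma} and the Section \ref{sec:simp_seq} estimates, use independence as in \req{independence_calc} to rewrite the expectation as an integral against Wiener measure, then invoke the frozen-coefficient Lemma \ref{state_ind_lemma} and split its constant into a $K$-piece and an $|h(0)|$-piece. Where you genuinely diverge is the order of the last two operations: the paper first moves every argument back from $t-m^\kappa$ to $t$ inside $h$ (the first term of \req{delta_H}), paying for the $m^{-1/2}$ Lipschitz constant of $G_t^m$ in $q$ (see \req{G_q_Lipschitz}), which is the source of the factor $m^{-(1-\kappa)q/2}$ and the need to take $\kappa$ close to $1$; having done so, it freezes the coefficients at $(t,q_t)$, so Lemma \ref{state_ind_lemma} produces exactly the integrand of $H_t$ and no step remains. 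You instead freeze at $(t,q_{t-m^\kappa})$, obtain the Gaussian with covariance $M(t,q_{t-m^\kappa})$ evaluated at $(J_{t-m^\kappa},q_{t-m^\kappa})$, and defer the passage to $H_t$ to the very end.

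That deferred step is where your argument has a real gap. What you must compare is $E[g(J_{t-m^\kappa},q_{t-m^\kappa})]$ with $E[g(J_t,q_t)]$, where $g(J,q)=\int h(J,q,z)\,\phi_{M(t,q)}(z)\,dz$ and $\phi_{M}$ denotes the Gaussian density with covariance $M$; this cannot be dispatched by ``one final application of Lemma \ref{reduction_lemma} at cost $KO(m^{\kappa/2})$''. First, that lemma needs a $C^1$ test function whose gradient bound is linear in $K$, but the $q$-dependence of $g$ runs through $M(t,q)$, and under Assumption \ref{assump:bounds} $\sigma$ (hence $M$) is only Lipschitz in $q$, so $g$ need not be $C^1$. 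More importantly, a direct estimate of the difference of densities produces a term proportional to $|h(J,q,0)|$, hence to $|h(0)|$, at order $m^{\kappa/2}$; this is strictly weaker than the stated $h(0)O(m^{\tilde q})$ for arbitrary $\tilde q$ and contradicts your own claim that the $h(0)$-contribution arises solely from the exponentially small state-independent step. The gap is repairable --- e.g. substitute $z=M(t,q)^{1/2}w$ so that only $\nabla h$ appears, using that the positive square root is Lipschitz on symmetric matrices whose spectra are pinned by \req{M_bounds}, or exploit $\int\partial_q\phi_{M(t,q)}(z)\,dz=0$ to replace $h$ by $h-h(J,q,0)$ --- and carried out this way your ordering would actually avoid the paper's $m^{-(1-\kappa)q/2}$ factor and the requirement that $\kappa$ be near $1$; but as written the step is unjustified and the claimed error structure does not follow. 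A minor further point: your $O(e^{-2\lambda t/m})$ bound for converting the inner integral to $\tilde H^m_t$ should carry the $\omega$-dependent factor $K^\prime=2^qK(1+\|Q_{J,t-m^\kappa}\|^q)$, whose integrability rests on \req{q_Lp_bound} and \req{J_t_Lp_bound}, exactly as in the paper's treatment of $\Delta H_2$.
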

\begin{proof}

Fix $\delta\in (0,1/2)$ and let $\kappa\in(2\delta,1)$. Lemma \ref{reduction_lemma} together with the estimates of Section \ref{sec:simp_seq} and Assumptions \ref{assump:conv} and \ref{assump:J_t} reduce the problem to computing $\lim_{m\rightarrow 0}H^m_{7,t}\equiv H_t$ where
\begin{align}\label{7H_def}
H^m_{7,t}\equiv E\left[h \left(J_{t-m^\kappa},q_{t-m^\kappa}, G_t^m(q_{t-m^\kappa},W^{m,t}\right)\right],
\end{align}
where $W^{m,t}$  and $G_t^m$ have been defined in  \req{W_mt_def} and \req{G_mt_def} respectively.

If we can show that $\kappa\in (2\delta,1)$ can be chosen so that $|H^m_{7,t}- H_t|=KO(m^\delta)$ (with the implied constant independent of $K$ and $h$) then, as discussed above, we will be able to conclude that $\lim_{m\rightarrow 0} E[h( Y_t^m)]=H_t$  and
\begin{align}
 E[h( Y_t^m)]=H_t+KO(m^\delta),
\end{align}
 again with the implied constant independent of $K$ and $h$.

As in the proof of Theorem \ref{thm:conv_dis}, see \req{independence_calc}, we can use independence to write
\begin{align}
H^m_{7,t}=E\left[  \int h(J_{t-m^\kappa},q_{t-m^\kappa},G^m_t(q_{t-m^\kappa},y)) \mu(dy)\right]
\end{align}
where $\mu$ is the Wiener measure on $C([0,1],\mathbb{R}^n)$.

We now begin computing the desired bound.
\begin{align}\label{delta_H}
&|H^m_{7,t}-H_t|\\
\leq& E\left[ \int\left| h(J_{t-m^\kappa},q_{t-m^\kappa},G^m_t(q_{t-m^\kappa},y)) - h(J_{t},q_t,G^m_t(q_{t},y))\right| d\mu(y)  \right]\notag\\
&+\left|E\left[ \int h(J_{t},q_t,G^m_t(q_{t},y)) d\mu(y) \right.\right.\notag\\
&\left.\left.-\frac{1}{(2\pi)^{n/2} |\det M(t,q_t)|^{1/2}} \int  h(J_t,q_t,z) e^{-z\cdot M^{-1}(t,q_t)z/2}dz\right]\right|  .\notag
\end{align}

To bound the first term, given conjugate exponents $p,\tilde p>1$, we employ a similar calculation to Lemma \ref{reduction_lemma} along with the estimates \req{G_bound} and \req{G_q_Lipschitz}. Using the notation $Q_{J,t}=(J_t,q_t)$ and letting $\tilde C$ denote a constant that potentially varies line to line (but does not depend on $m$, $K$, or $h$) we have
\begin{align}
&  E\left[ \int\left| h(Q_{J,t-m^\kappa},G^m_t(q_{t-m^\kappa},y)) - h(Q_{J,t},G^m_t(q_{t},y))\right| d\mu(y)  \right]\\
\leq& \tilde C K\int \left( 1+E[\|Q_{J,t}\|^{\tilde pq}]^{1/\tilde p}+E[\|Q_{J,t-m^\kappa}-
Q_{J,t}\|^{\tilde pq}]^{1/\tilde p}+E[\|G^m_t(q_{t},y)\|^{\tilde pq}]^{1/\tilde p}\right.\notag\\
&\left.+E[\|G^m_t(q_{t-m^\kappa},y)-G^m_t(q_{t},y)\|^{\tilde pq}]^{1/\tilde p}\right)\notag\\
&\times \left(E[ \|Q_{J,t}-Q_{J,t-m^\kappa}\|^p]^{1/p}+E[\|G^m_t(q_{t},y)- G^m_t(q_{t-m^\kappa},y)\|^p]^{1/p}\right) d\mu(y)\notag\\
\leq& \tilde CK\left( \int \left( 1+E[\|Q_{J,t}\|^{\tilde pq}]^{1/\tilde p}+E[\|Q_{J,t-m^\kappa}-
Q_{J,t}\|^{\tilde pq}]^{1/\tilde p}+E[\|G^m_t(q_{t},y)\|^{\tilde pq}]^{1/\tilde p}\right.\right.\notag\\
&\left.\left.+E[\|G^m_t(q_{t-m^\kappa},y)-G^m_t(q_{t},y)\|^{\tilde pq}]^{1/\tilde p}\right)^2d\mu(y)\right)^{1/2}\notag\\
&\times \left(\int\left(E[ \|Q_{J,t}-Q_{J,t-m^\kappa}\|^p]^{1/p}+E[\|G^m_t(q_{t},y)- G^m_t(q_{t-m^\kappa},y)\|^p]^{1/p}\right)^2 d\mu(y)\right)^{1/2}\notag
\end{align}
\begin{align}
\leq& \tilde CK\bigg(   1+E[\|Q_{J,t}\|^{\tilde pq}]^{1/\tilde p}+E[\|Q_{J,t-m^\kappa}-
Q_{J,t}\|^{\tilde pq}]^{1/\tilde p}+\left(\int E[\|G^m_t(q_{t},y)\|^{\tilde pq}]^{2/\tilde p}d\mu(y)\right)^{1/2}\notag\\
&+\left(\int E[\|G^m_t(q_{t-m^\kappa},y)-G^m_t(q_{t},y)\|^{\tilde pq}]^{2/\tilde p}d\mu(y)\right)^{1/2}\bigg)\notag\\
&\times \left(E[ \|Q_{J,t}-Q_{J,t-m^\kappa}\|^p]^{1/p}+\left(\int E[\|G^m_t(q_{t},y)- G^m_t(q_{t-m^\kappa},y)\|^p]^{2/p}d\mu(y)\right)^{1/2}\right)\notag\\
\leq& \tilde CK\bigg(   1+(m^\kappa+m^{\kappa/2})^q+m^{-q/2}\left(\int \sup_{t-m^\kappa\leq s\leq t}\|y(m^\kappa)-y(s-(t-m^\kappa))\|^{2q}d\mu(y)\right)^{1/2}\\
&+m^{-q/2}E[\|q_t-q_{t-m^\kappa}\|^{\tilde pq}]^{1/\tilde p}\left(\int  \sup_{t-m^\kappa\leq s\leq t}\|y(m^\kappa)-y(s-(t-m^\kappa))\|^{2q}d\mu(y)\right)^{1/2}\bigg)\notag\\
&\times \bigg( m^{\kappa/2}+m^{\kappa}+m^{-1/2}E[\|q_t-q_{t-m^\kappa}\|^p]^{1/p}\notag\\
&\hspace{4cm}\times\left(\int \sup_{t-m^\kappa\leq s\leq t}\|y(m^\kappa)-y(s-(t-m^\kappa))\|^2d\mu(y)\right)^{1/2}\bigg).\notag
\end{align}

For any $\tilde q>1$ we have
\begin{align}
&\int  \sup_{t-m^\kappa\leq s\leq t}\|y(m^\kappa)-y(s-(t-m^\kappa))\|^{ \tilde q}d\mu(y)\\
=&E\left[\sup_{t-m^\kappa\leq s\leq t}\|W_t -W_{s}\|^{ \tilde q}\right]\notag\\
\leq &2^{q}E\left[\sup_{t-m^\kappa\leq s\leq t}\|W_s-W_{t-m^\kappa}\|^{\tilde q}\right]\notag\\
\leq &\tilde C m^{\tilde q\kappa/2}.\notag
\end{align}

Without loss of generality we can assume $2q>1$ and $m\leq 1$. Therefore
\begin{align}
&  E\left[ \int\left| h(Q_{J,t-m^\kappa},G^m_t(q_{t-m^\kappa},y)) - h(Q_{J,t},G^m_t(q_{t},y))\right| d\mu(y)  \right]\\
\leq & \tilde CK\left(   1+(m^\kappa+m^{\kappa/2})^q+m^{-(1-\kappa) q/2}(1+E[\|q_t-q_{t-m^\kappa}\|^{\tilde pq}]^{1/\tilde p})\right)\notag\\
&\times \left( m^{\kappa/2}+m^{\kappa}+m^{-(1-\kappa)/2}E[\|q_t-q_{t-m^\kappa}\|^p]^{1/p}\right)\notag\\
\leq & \tilde CK\left(   1+(m^\kappa+m^{\kappa/2})^q+m^{-(1-\kappa) q/2}(1+(m^{\kappa/2}+m^\kappa)^q)\right)\notag\\
&\times \left( m^{\kappa/2}+m^{\kappa}+m^{-(1-\kappa)/2}(m^{\kappa/2}+m^\kappa)\right)\notag\\
\leq & \tilde CKm^{-(1-\kappa) q/2} \left( m^{\kappa/2}+m^{\kappa-1/2}\right).\notag
\end{align}
Since $\kappa\in(2\delta,1)$ is arbitrary we can choose $\kappa>1/2$ sufficiently close to $1$ and obtain
\begin{align}
&  E\left[ \int\left| h( J_{t-m^\kappa}, q_{t-m^\kappa},G^m_t(q_{t-m^\kappa},y)) - h(J_{t},q_t,G^m_t(q_{t},y))\right| d\mu(y)  \right]=KO(m^\delta)
\end{align}
as $m\rightarrow 0$, where the implied constant is independent of $K$ and $h$.

Now focus on the second term in \req{delta_H},
\begin{align}
\Delta H_2\equiv |H^m_{8,t}-H_t|,
\end{align}
\begin{align}\label{8H_def}
 H^m_{8,t}\equiv E\left[ \int h(J_{t},q_t,G^m_t(q_{t},y)) d\mu(y)  \right].
\end{align}

Fixing $\omega\in\Omega$   and considering  $\int h(J_{t}(\omega),q_t(\omega),G^m_t(q_{t}(\omega),y)) d\mu(y)$, we see that  this is the same expression that one would obtain for $H^m_{8,t}$ had one been, from the beginning, working with fixed (i.e. time and state-independent) drag and diffusion $\tilde \gamma(t,q_t(\omega))$ and $\sigma(t,q_t(\omega))$, and the  function $\tilde h:\mathbb{R}^n\rightarrow\mathbb{C}$, $\tilde h(z)=h(Q_{J,t}(\omega),z)$, {\em and} on a different probability space with a Wiener process $\tilde W$, distinct from the $W$ used up to this point.  We denote the expected value with respect to this new probability measure by $\tilde E$.

  $\tilde h$ is $C^1$ and we have the bounds
\begin{align}
&\|\nabla \tilde h(z)\|\leq 2^qK(1+\|Q_{J,t}(\omega)\|^q +\|z\|^q)\leq 2^q K(1+\|Q_{J,t}(\omega)\|^q)(1+\|z\|^q)\\
&\equiv K^\prime (1+\|z\|^q),\notag\\
& |\tilde h(z)|\leq |\tilde h(0)|+ K^\prime(1+\|z\|^q)\|z\|\leq2\max\{ |\tilde h(0)|,K^\prime\}(1+\|z\|^{1+q})\\
&\equiv  \tilde K(1+\|z\|^{1+q}).\notag
\end{align}

Therefore applying our  arguments from Section \ref{sec:limit_dist}  to this system shows that   there exists $m_0$, $\tilde C$ independent of $m$, $\omega$, $h$, and $K$ such that for all $0<m\leq m_0$ we have
\begin{align}
&\bigg|\tilde E\left[\tilde h\left(\frac{1}{\sqrt{m}}  e^{-\tilde\gamma(t,q_t(\omega)) t/m} \int_0^t e^{\tilde\gamma(t,q_t(\omega)) s/m} \sigma(t,q_t(\omega))  d\tilde W_s\right)\right]\\
&-\int h(Q_{J,t}(\omega),G^m_t(q_{t}(\omega),y)) d\mu(y)\bigg|\notag\\
\leq&     K^\prime(\omega)\tilde C  m^{\kappa/2}\notag
\end{align}
and
\begin{align}
&\left|\tilde E\left[\tilde h\left(\frac{1}{\sqrt{m}}  e^{-\tilde\gamma(t,q_t(\omega)) t/m} \int_0^t e^{\tilde\gamma(t,q_t(\omega)) s/m} \sigma(t,q_t(\omega))  d\tilde W_s\right)\right]\right.\\
&\left.-\frac{1}{(2\pi)^{n/2} |\det M(t,q_t(\omega))|^{1/2}} \int  h(Q_{J,t}(\omega),z) e^{-z\cdot M^{-1}(t,q_t(\omega))z/2}dz \right|\notag\\
\leq&    \tilde K(\omega)\tilde C  m^{\tilde q},\notag
\end{align}
the latter by Lemma \ref{state_ind_lemma}. Note that the randomness in the above two expectations comes from $\tilde W_s$;  $\omega$ here is fixed and {\em not} integrated over in these expressions.

Without loss of generality, we can assume $\tilde q\geq \delta$. Therefore, for $0<m\leq m_0$ we have the bound
\begin{align}
\Delta H_2\leq&\tilde C  m^{\kappa/2} E[K^\prime]+\tilde C  m^{\tilde q}E[\tilde K]\\
\leq& E[K^\prime] O(m^\delta)+E[|\tilde h(0)|]O(m^{\tilde q})\notag\\
\leq& K(1+E[\|Q_{J,t}\|^q]) O(m^\delta)+E[| h(0)|+K(1+\|Q_{J,t}\|^q)\|Q_{J,t}\|]O(m^{\tilde q})\notag\\
=&KO(m^{\delta})+|h(0)|O(m^{\tilde q}).\notag
\end{align}
The implied constants are independent of $K$ and $h$ and are finite by \req{q_Lp_bound} and \req{J_t_Lp_bound}, so this completes the proof.

\end{proof}

As remarked in the introduction, an important step in \cite{Hottovy2014} consists of showing that the kinetic energy $m\|v^m_t\|^2$ is of order one, suggesting that $v^m_t$ diverges as ${1 \over \sqrt{m}}$. We end by using the above theorem to prove a result that further supports this intuition.  We show that, in probability and for any $q\in(0,1/2)$, $\|u_t^m\|$ grows faster than $1/m^q$ as $m\to 0$.
\begin{corollary}
Let $t>0$, $R> 0$, and $q\in (0,1/2)$.  Then
\begin{align}
\lim_{m\to 0}P(\|u_t^m\|\leq R/m^q)=0.
\end{align}
\end{corollary}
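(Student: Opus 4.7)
Proof proposal.  The plan is to rewrite the condition in terms of the scaled velocity $z_t^m = u_t^m/\sqrt{m}$ and then deduce the result from the Gaussian limit of $z_t^m$ established in Theorem \ref{thm:conv_dis}, combined with the anti-concentration (small-ball) property of non-degenerate Gaussians and their mixtures.  Concretely, the event translates into one of the form $\{\|z_t^m\| \leq \epsilon_m\}$, where $\epsilon_m \to 0$ under the hypothesis $q \in (0,1/2)$, so the problem reduces to showing
\[
\lim_{m \to 0} P(\|z_t^m\| \leq \epsilon_m) = 0.
\]

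First I would invoke Theorem \ref{thm:conv_dis} with $N = 1$ (and with a trivial $J^m$) to conclude that $z_t^m$ converges weakly to a random variable $Z$ whose law is the mixture of centered Gaussians with (random) covariances $M(t, q_t)$.  By \req{M_bounds}, which applies pointwise in $(t,q)$ under Assumption \ref{assump:bounds}, the eigenvalues of $M(t,q)$ are uniformly bounded between $\mu/(2\|\gamma\|_\infty)$ and $\|\Sigma\|_\infty/(2\lambda)$.  Consequently every Gaussian in the mixture has a density pointwise bounded by a single deterministic constant, so $P(\|Z\| \leq \epsilon) \to 0$ as $\epsilon \to 0$ by dominated convergence applied fibrewise.

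Next I would combine this with a monotonicity argument.  Fix an arbitrary $\epsilon > 0$; since $\epsilon_m \to 0$, we have $\epsilon_m \leq \epsilon$ for all sufficiently small $m$, and hence $P(\|z_t^m\| \leq \epsilon_m) \leq P(\|z_t^m\| \leq \epsilon)$.  Absolute continuity of the mixture law implies that every sphere $\{\|z\| = \epsilon\}$ is $\nu$-null, making the closed ball a continuity set, so the portmanteau theorem gives $\lim_{m \to 0} P(\|z_t^m\| \leq \epsilon) = P(\|Z\| \leq \epsilon)$.  Therefore $\limsup_{m \to 0} P(\|z_t^m\| \leq \epsilon_m) \leq P(\|Z\| \leq \epsilon)$, and sending $\epsilon \to 0$ finishes the argument.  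If one wishes to avoid portmanteau, the same conclusion can be obtained by applying Theorem \ref{thm:conv_dis2} directly to a $C^1$ bump function $h_\epsilon$ dominating the indicator $\mathbf{1}_{\{\|z\| \leq \epsilon\}}$, since such an $h_\epsilon$ trivially satisfies the polynomial-gradient hypothesis.

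The main obstacle I anticipate is that the limit $Z$ is a genuine Gaussian \emph{mixture} rather than a single Gaussian, so the anti-concentration must hold uniformly in the randomness of $q_t$.  Fortunately, the uniform two-sided eigenvalue bounds on $M$ from \req{M_bounds}, which in turn come from Assumption \ref{assump:bounds}(5), supply exactly the uniformity needed: the classical small-ball bound $P(\|G\| \leq \epsilon) = O(\epsilon^n)$ for a standard Gaussian $G$ transfers to the mixture via dominated convergence, with an implicit constant depending only on the constants in Assumption \ref{assump:bounds}.  No further delicate estimates are required beyond those already assembled in Sections \ref{sec:limit_dist} and \ref{sec:conv_rate}.
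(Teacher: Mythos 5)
Your argument is correct, and its main line takes a genuinely different (though related) route from the paper. You translate the event into $\{\|z_t^m\|\leq \epsilon_m\}$ with $\epsilon_m\to 0$ exactly as the paper's proof does in its first line, use monotonicity in $\epsilon$, and then pass to the limit $m\to 0$ via the \emph{qualitative} weak convergence of Theorem \ref{thm:conv_dis} combined with the portmanteau theorem, checking that the closed ball is a continuity set because the limiting $z$-marginal is absolutely continuous (the uniform two-sided eigenvalue bounds on $M(t,q)$ from \req{M_bounds}, valid pointwise under Assumption \ref{assump:bounds}, indeed give a deterministic bound on the mixture density, and hence also the small-ball estimate needed to send $\epsilon\to 0$). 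The paper instead stays within Section \ref{sec:conv_rate}: it dominates the indicator of the ball by a smooth bump $\phi_{\epsilon^{1/2-q}R}$, applies the quantitative Theorem \ref{thm:conv_dis2} (which only needs a $C^1$ test function with bounded gradient) to take $m\to 0$, and then uses dominated convergence in $\epsilon$. Your route buys economy of hypotheses — only the weak-convergence theorem is needed — at the cost of the continuity-set verification (which you could even skip by using the one-sided portmanteau inequality $\limsup_m P(z_t^m\in \bar B_\epsilon)\leq P(Z\in\bar B_\epsilon)$ for closed sets); the paper's route avoids portmanteau entirely by leaning on the stronger convergence-rate theorem, and your proposed fallback with a $C^1$ bump dominating the indicator is precisely the paper's proof.
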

\begin{proof}
Let $\epsilon>0$ and $\phi$ be a smooth bump function for the ball $B_1(0)\subset \mathbb{R}^n$ with support in $B_2(0)$ and values in $[0,1]$. Given $\delta>0$ let $\phi_\delta(z)=\phi(z/\delta)$.  Then for any $0<m\leq\epsilon$,
\begin{align}
&P(\|u_t^m\|\leq R/m^q)=P(\|z_t^m\|\leq m^{1/2-q}R)\\
\leq& P(\|z_t^m\|\leq \epsilon ^{1/2-q}R)\leq E\left[\phi_{\epsilon^{1/2-q}R}(z_t^m)\right].\notag
\end{align}
$\phi_{\epsilon^{1/2-q}R}$ is smooth with bounded derivative, therefore Theorem \ref{thm:conv_dis2} implies
\begin{align}
&\lim_{m\rightarrow 0}E\left[\phi_{\epsilon^{1/2-q}R}(z_t^m)\right]\\
=&E\left[ \frac{1}{(2\pi)^{n/2} |\det M(t,q_t)|^{1/2}} \int  \phi_{\epsilon^{1/2-q}R}(z) e^{-z\cdot M^{-1}(t,q_t)z/2}dz\right].\notag
\end{align}
Hence
\begin{align}
&\limsup_{m\to 0}P(\|u_t^m\|\leq R/m^q)\\
\leq& E\left[ \frac{1}{(2\pi)^{n/2} |\det M(t,q_t)|^{1/2}} \int  \phi_{\epsilon^{1/2-q}R}(z) e^{-z\cdot M^{-1}(t,q_t)z/2}dz\right]\notag
\end{align}
for all $\epsilon>0$.  The right hand side converges to $0$ as $\epsilon\to0$ by the dominated convergence theorem, thus proving the claim.

\end{proof}

\appendix

\numberwithin{assumption}{section}
\section{Material from \cite{BirrellHomogenization}}\label{app:assump}
\setcounter{assumption}{0}
    \renewcommand{\theassumption}{\Alph{section}\arabic{assumption}}

    \renewcommand{\thelemma}{\Alph{section}\arabic{lemma}}

In this appendix, we give a list of properties  that, as shown in \cite{BirrellHomogenization}, are sufficient to  guarantee that the solutions to the SDE \req{q_eq}-\req{u_eq} satisfy the conditions from  Assumption \ref{assump:conv}. The assumptions listed here that are explicitly used in the current paper are referenced when they are needed. We collect the full set of assumptions here only for reference.

We assume that
\begin{enumerate}
\item There exists $a,b\geq 0$ s.t. $\tilde V(t,q)\equiv a+b\|q\|^2+V(t,q)$ is non-negative for all $t,q$.
\item There exists $M,C> 0$ s.t. $|\partial_tV(t,q)|\leq M+C(\|q\|^2+\tilde V(t,q))$ for all $t,q$.
\item There exists $C>0$ such that the (random) initial conditions satisfy $ \|u^m_0\|^2 \leq Cm$ for all $m>0$ and all $\omega\in\Omega$ and $E[\|q^m_0\|^p]<\infty$, $E[\|q_0\|^p]<\infty$, and $E[\|q_0^m-q_0\|^p]^{1/p}=O(m^{1/2})$ for all $p>0$.
\item $\partial_t\psi$, $\tilde F$, and $\sigma$ are continuous and bounded.
\item  $\gamma$ is symmetric with eigenvalues  bounded below by some $\lambda>0$.
\end{enumerate}
We also assume that, for every $T>0$, the following hold uniformly for $(t,x)\in [0,T]\times\mathbb{R}^{2n}$:
\begin{enumerate}
\item  $\gamma$ and $\tilde F$ are continuous and bounded.
\item $\nabla_q V$, $\tilde F$, and $\sigma$ are Lipschitz in $x$ uniformly in $t$.
\item $V$ is $C^2$, $\gamma$ is $C^2$, $\psi$ is $C^3$, and the following are bounded:\\
$\nabla_qV$, $\partial_t\psi$, $\partial_{q^i}\psi$, $\partial_{q^i}\partial_{q^j}\psi$, $\partial_t\partial_{q^i}\psi$, $\partial_t\partial_{q^j}\partial_{q^i}\psi$, $\partial_{q^l}\partial_{q^j}\partial_{q^i}\psi$, $\partial_t\gamma$, $\partial_{q^i} \gamma$, $\partial_t\partial_{q^j}\gamma$,  $\partial_{q^i}\partial_{q^j}\gamma$.
\end{enumerate}
Note that these properties imply that $\tilde\gamma$ (see \req{tilde_gamma_def}), $\tilde\gamma^{-1}$, $\partial_t\tilde\gamma^{-1}$, $\partial_{q^i}\tilde\gamma^{-1}$, $\partial_t\partial_{q^j}\tilde\gamma^{-1}$, and  $\partial_{q^i}\partial_{q^j}\tilde\gamma^{-1}$ are bounded on $ [0,T]\times\mathbb{R}^{n}$ as well.

\section{Required Lemmas}

\setcounter{lemma}{0}
    \renewcommand{\thelemma}{\Alph{section}\arabic{lemma}}

We need the following lemma bounding the spectrum of a matrix.  See, for example, Appendix A in \cite{BirrellHomogenization} for a proof.
\begin{lemma}\label{eig_bound_lemma1}
Let $A$ be an $n\times n$ real or complex matrix with symmetric part $A^s=\frac{1}{2}(A+A^*)$. If the eigenvalues of $A^s$ are bounded above (resp. below) by $\alpha$ then the real parts of the eigenvalues of $A$ are bounded above (resp. below) by $\alpha$.
\end{lemma}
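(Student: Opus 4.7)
The plan is to reduce the claim about real parts of eigenvalues of $A$ to the Rayleigh quotient characterization of the eigenvalues of the Hermitian matrix $A^s$. This is a standard linear algebra fact, so the proof should be short and direct.

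First I would let $\lambda\in\mathbb{C}$ be any eigenvalue of $A$ with a unit eigenvector $v\in\mathbb{C}^n$ (i.e.\ $\langle v,v\rangle=1$ in the standard Hermitian inner product). Then $Av=\lambda v$ gives $\langle v,Av\rangle=\lambda$. Taking real parts,
\begin{align*}
\operatorname{Re}\lambda=\operatorname{Re}\langle v,Av\rangle=\tfrac{1}{2}\bigl(\langle v,Av\rangle+\overline{\langle v,Av\rangle}\bigr)=\tfrac{1}{2}\bigl(\langle v,Av\rangle+\langle v,A^*v\rangle\bigr)=\langle v,A^s v\rangle,
\end{align*}
using $\overline{\langle v,Av\rangle}=\langle Av,v\rangle=\langle v,A^*v\rangle$.

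Next I would note that $A^s$ is Hermitian by construction, so it is unitarily diagonalizable with real eigenvalues. By the Rayleigh--Ritz characterization (equivalently, the Courant--Fischer min-max principle), for any unit vector $v$ the quantity $\langle v,A^s v\rangle$ lies between the smallest and largest eigenvalues of $A^s$. Hence if the eigenvalues of $A^s$ are bounded above (resp.\ below) by $\alpha$, the same bound holds for $\langle v,A^sv\rangle$, and therefore for $\operatorname{Re}\lambda$. Since $\lambda$ was an arbitrary eigenvalue of $A$, the claim follows.

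There is no real obstacle here; the only point worth being careful about is that $v$ must be allowed to be complex (since $A$ may be a real matrix with complex eigenvalues), so the inner product and the definition $A^s=\tfrac{1}{2}(A+A^*)$ must be the Hermitian ones, as stated. With that convention the identity $\operatorname{Re}\langle v,Av\rangle=\langle v,A^s v\rangle$ and the Rayleigh quotient bound close the argument immediately.
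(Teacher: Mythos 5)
Your proof is correct: the identity $\operatorname{Re}\lambda=\langle v,A^{s}v\rangle$ for a unit eigenvector $v$ together with the Rayleigh--Ritz bound for the Hermitian matrix $A^{s}$ gives exactly the claimed bound, and you have rightly insisted on complex vectors and the Hermitian adjoint so the argument covers real $A$ with complex eigenvalues. The paper itself does not reproduce a proof (it defers to Appendix A of the cited earlier work), and your argument is precisely the standard one used there, so there is nothing further to reconcile.
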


The following lemma shows, in particular, that on compact time intervals the limit process, $q_t$, is $1/2$-H\"older continuous in $t$ in the $L^p$-norm.
\begin{lemma}\label{q_cont_lemma}
Let $q_t$ be the solution to \req{q_SDE} and $T>0$. Then for any $p>0$ there exists $\tilde C>0$ such that for any $0\leq s\leq t\leq T$  we have
\begin{align}
E\left[  \|q_t -q_s\|^p\right]^{1/p}\leq\tilde C\left((t-s)+\left(t-s \right)^{1/2}\right)
\end{align}
where $\tilde C$ depends only on $T$, $p$, $n$, and the drift vector field and diffusion matrix of the SDE for $q_t$.
\end{lemma}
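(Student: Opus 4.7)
The plan is to write $q_t$ in its integrated form from \req{q_SDE} and estimate the increment $q_t-q_s$ by splitting it into a drift term and a stochastic integral term. Concretely, for $0\le s\le t\le T$,
$$q_t - q_s = \int_s^t b(r,q_r)\,dr + \int_s^t a(r,q_r)\,dW_r,$$
where $b(r,q) \equiv \tilde\gamma^{-1}(r,q)F(r,q,\psi(r,q))+S(r,q)$ and $a(r,q) \equiv \tilde\gamma^{-1}(r,q)\sigma(r,q)$ are the drift vector field and diffusion matrix of the limiting SDE. By Minkowski's inequality,
$$E\bigl[\|q_t-q_s\|^p\bigr]^{1/p} \le E\!\left[\left\|\int_s^t b(r,q_r)\,dr\right\|^p\right]^{1/p} + E\!\left[\left\|\int_s^t a(r,q_r)\,dW_r\right\|^p\right]^{1/p}.$$

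The key observation is that both $b$ and $a$ are uniformly bounded on $[0,T]\times\mathbb{R}^n$: $\tilde\gamma^{-1}$, $\tilde F$, $\nabla_q V$, $\partial_t\psi$, $S$, and $\sigma$ are all bounded by Assumption \ref{assump:bounds}, so $\|b\|_\infty$ and $\|a\|_\infty$ are finite constants determined by $T$ and by the coefficients of \req{q_SDE}. For the drift term this immediately yields
$$\left\|\int_s^t b(r,q_r)\,dr\right\| \le \|b\|_\infty (t-s),$$
which produces the linear $(t-s)$ contribution.

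For the stochastic integral, I would first handle $p \geq 2$ by applying the Burkholder-Davis-Gundy inequality followed by Minkowski's inequality for integrals, exactly as done in Section \ref{sec:simp_seq}:
$$E\!\left[\left\|\int_s^t a(r,q_r)\,dW_r\right\|^p\right]^{1/p} \le C_p\, E\!\left[\left(\int_s^t \|a(r,q_r)\|^2\,dr\right)^{p/2}\right]^{1/p} \le C_p\|a\|_\infty (t-s)^{1/2},$$
contributing the $(t-s)^{1/2}$ piece, with $C_p$ the BDG constant (depending only on $p$ and $n$). Adding the two bounds proves the claim for $p\ge 2$ with $\tilde C = \max\{\|b\|_\infty,\,C_p\|a\|_\infty\}$, which depends only on $T$, $p$, $n$ and on the coefficients of the limiting SDE.

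For $0<p<2$, I would simply apply Jensen's inequality $E[\|q_t-q_s\|^p]^{1/p}\le E[\|q_t-q_s\|^2]^{1/2}$ and quote the $p=2$ estimate just established, absorbing the resulting constant into $\tilde C$. I do not anticipate any serious obstacle: the proof is a standard $L^p$-increment estimate for an SDE with bounded drift and diffusion, and every boundedness hypothesis needed is already packaged into Assumption \ref{assump:bounds}. The only minor care needed is to make sure the final constant is tracked as depending only on the quantities listed in the statement, which follows automatically since each bound above depends only on $T$, $p$, $n$, $\|b\|_\infty$, and $\|a\|_\infty$.
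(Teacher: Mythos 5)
Your proposal is correct and follows essentially the same route as the paper's proof: split $q_t-q_s$ into the drift integral (bounded by $\|b\|_\infty(t-s)$ using boundedness of the coefficients) and the stochastic integral (handled by Burkholder--Davis--Gundy to get the $(t-s)^{1/2}$ term), then extend to small $p$ by Jensen/H\"older. No gaps; the constant-tracking matches the statement.
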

\begin{proof}
By \req{q_SDE},
\begin{align}
q_t-q_s=& \int_s^t \tilde \gamma^{-1}(r,q_r)F(r,q_r,\psi(r,q_r))+S(r,q_r)dr+\int_s^t\tilde \gamma^{-1}(r,q_r)\sigma(r,q_r) dW_r.
\end{align}
Therefore, using boundedness of $\tilde\gamma$, $F$, $S$, $\sigma$ on $[0,T]\times\mathbb{R}^{2n}$ along with the  Burkholder-Davis-Gundy inequality, for $p>1$ we have
\begin{align}
E[\|q_t-q_s\|^p]^{1/p}\leq& \tilde C(t-s)+E\left[\left\|\int_s^t\tilde \gamma^{-1}(r,q_r)\sigma(r,q_r) dW_r\right\|^p\right]^{1/p}\\
\leq& \tilde C\left((t-s)+E\left[\left(\int_s^t\|\tilde \gamma^{-1}(r,q_r)\sigma(r,q_r))\|^2dr \right)^{p/2}\right]^{1/p}\right)\notag\\
\leq& \tilde C\left((t-s)+\left(t-s \right)^{1/2}\right).\notag
\end{align}
The result for all $p>0$ then follows from an application of H\"older's inequality.
\end{proof}

We will need the following bound on the difference between the fundamental solutions corresponding to two  linear ODEs.
\begin{lemma}\label{fund_matrix_diverg}
Let $B_i:[0,T]\rightarrow\mathbb{R}^{n\times n}$ be continuous and suppose their symmetric parts have eigenvalues bounded above by $\lambda$, uniformly in $t$.  Consider the fundamental solutions $\Phi_i^\prime(t)=B_i(t)\Phi_i(t)$, $\Phi_i(0)=I$.  Then for any $0\leq t\leq T$ we have the bound
\begin{align}
\|\Phi_1(t)-\Phi_2(t)\|\leq \int_0^t \|B_1(s)-B_2(s)\| ds e^{\lambda t} .
\end{align}
\end{lemma}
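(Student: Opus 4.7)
The statement is a deterministic Gr\"onwall-type estimate comparing two matrix fundamental solutions, so the plan is to derive a linear ODE for the difference $D(t)\equiv \Phi_1(t)-\Phi_2(t)$, solve it by variation of parameters, and then apply the sharp exponential bound on the fundamental matrix implied by the assumption on the symmetric parts of the $B_i$.

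\textbf{Step 1: ODE for the difference.} I would first compute
\begin{align}
D'(t)=B_1(t)\Phi_1(t)-B_2(t)\Phi_2(t)=B_1(t)D(t)+\bigl(B_1(t)-B_2(t)\bigr)\Phi_2(t),
\end{align}
with the initial condition $D(0)=0$. This is a linear inhomogeneous matrix ODE driven by $\Phi_1$ on the left and by the source $(B_1-B_2)\Phi_2$. Variation of parameters then yields the explicit representation
\begin{align}
D(t)=\int_0^t \Phi_1(t)\Phi_1(s)^{-1}\bigl(B_1(s)-B_2(s)\bigr)\Phi_2(s)\,ds.
\end{align}

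\textbf{Step 2: Exponential bounds on the fundamental matrices.} I would then establish that for any solution $v(t)=\Phi_i(t)v_0$ of the linear ODE, differentiating $\|v(t)\|^2$ gives $\frac{d}{dt}\|v\|^2=2\langle v,B_i v\rangle=2\langle v, B_i^s v\rangle\le 2\lambda\|v\|^2$, by the hypothesis on the eigenvalues of the symmetric part $B_i^s$. Gr\"onwall's inequality (in its differential form) then yields $\|v(t)\|\le \|v_0\| e^{\lambda t}$, so $\|\Phi_i(t)\|\le e^{\lambda t}$. The same argument applied to the fundamental solution starting at time $s$ (which solves the same linear ODE) gives the two-parameter bound $\|\Phi_1(t)\Phi_1(s)^{-1}\|\le e^{\lambda(t-s)}$ for $t\ge s$.

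\textbf{Step 3: Combine.} Taking norms in the variation-of-parameters formula and applying the two bounds from Step 2, I get
\begin{align}
\|D(t)\|\le \int_0^t e^{\lambda(t-s)}\,\|B_1(s)-B_2(s)\|\,e^{\lambda s}\,ds,
\end{align}
and since $e^{\lambda(t-s)}e^{\lambda s}=e^{\lambda t}$ is independent of $s$, I can pull it outside the integral to conclude
\begin{align}
\|\Phi_1(t)-\Phi_2(t)\|\le e^{\lambda t}\int_0^t \|B_1(s)-B_2(s)\|\,ds,
\end{align}
which is the claimed inequality.

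\textbf{Main obstacle.} There is no deep obstacle here; this is a standard perturbation-of-fundamental-solutions calculation. The only subtlety to be careful about is that the exponential bound on $\|\Phi_i\|$ must come from the symmetric-part argument (so that $\lambda$, not $\|B_i\|_\infty$, appears in the exponent), and that combining the two bounds produces the exact prefactor $e^{\lambda t}$ rather than something larger like $e^{2\lambda t}$. Variation of parameters (rather than a direct Gr\"onwall applied to $\|D(t)\|\le \int_0^t \|B_1\|\|D\|\,ds+\dots$) is what makes this precision possible.
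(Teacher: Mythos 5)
Your proposal is correct and follows essentially the same route as the paper: the paper also writes $Y=\Phi_1-\Phi_2$, derives $Y'=B_1Y+(B_1-B_2)\Phi_2$ with $Y(0)=0$, solves by variation of parameters, and then bounds $\|\Phi_1(t)\Phi_1^{-1}(s)\|\le e^{\lambda(t-s)}$ and $\|\Phi_2(s)\|\le e^{\lambda s}$ to get exactly the stated estimate. Your Step 2 merely spells out the standard $\frac{d}{dt}\|v\|^2\le 2\lambda\|v\|^2$ argument that the paper takes as known, so there is no substantive difference.
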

\begin{proof}
Define $Y=\Phi_1-\Phi_2$.
\begin{align}
Y^\prime=B_1\Phi_1-B_2\Phi_2=B_1 Y+(B_1-B_2)\Phi_2,\,\, \, Y(0)=0,
\end{align}
hence 
\begin{align}
Y(t)=\Phi_1(t)\int_0^t\Phi_1^{-1}(s)(B_1(s)-B_2(s))\Phi_2(s)ds.
\end{align}
Therefore
\begin{align}
\|Y(t)\|\leq \int_0^t e^{\lambda(t-s)} \|B_1(s)-B_2(s)\| e^{\lambda s}ds= \int_0^t \|B_1(s)-B_2(s)\| ds e^{\lambda t} .
\end{align}
\end{proof}

The following result concerns the distribution of certain integrals with respect to a Wiener process.
\begin{lemma}\label{lemma:exp_mart}
Let $A:[0,\infty)\rightarrow \mathbb{R}^{m\times n}$ be in $L^2_{loc}$. Then $X_t=\int_{0}^t A(s)dW_s$  is a continuous martingale and each $X_t$ is normally distributed with mean zero and covariance matrix $C_t^{ij}=\int_{0}^t A^i_k(s)\delta^{kl} A_l^j(s)ds$.
\end{lemma}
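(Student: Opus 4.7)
The plan is to establish the three claims—continuity and martingale property, computation of the covariance, and Gaussianity—in that order, using only standard tools from stochastic calculus. Since $A$ is deterministic and in $L^2_{\mathrm{loc}}$, it is in particular progressively measurable with $\int_0^t \|A(s)\|^2\,ds < \infty$ for all $t$, so the vector Itô integral $X_t = \int_0^t A(s)\,dW_s$ is defined componentwise by $X_t^i = \int_0^t A^i_\rho(s)\,dW^\rho_s$. Each component is a continuous square-integrable $(\mathcal{F}_t)$-martingale by the standard construction of the Itô integral, which gives the first claim and, by taking expectations of $X_0 = 0$, the mean-zero property.

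For the covariance, I would apply the Itô isometry (and its polarized form) componentwise. One gets
\begin{align}
E[X_t^i X_t^j] = E\left[\int_0^t A^i_\rho(s)\,dW^\rho_s \int_0^t A^j_\sigma(s)\,dW^\sigma_s\right] = \int_0^t A^i_\rho(s)\,\delta^{\rho\sigma}\,A^j_\sigma(s)\,ds,
\end{align}
which matches the stated formula for $C_t^{ij}$ after reindexing.

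The core task is to establish Gaussianity. I would do this via the characteristic function using the complex exponential martingale trick. Fix $\xi \in \mathbb{R}^m$ and define
\begin{align}
Z_t \equiv \exp\!\left(i\xi \cdot X_t + \tfrac{1}{2}\int_0^t \xi \cdot A(s)A^T(s)\xi\,ds\right).
\end{align}
Applying Itô's formula to the function $(x,c) \mapsto \exp(i\xi\cdot x + c/2)$ with $x = X_t$ and $c = \int_0^t \xi\cdot A(s)A^T(s)\xi\,ds$, the drift terms cancel and one obtains
\begin{align}
dZ_t = i\,Z_t\,\xi \cdot A(t)\,dW_t.
\end{align}
Hence $Z_t$ is a complex local martingale. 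Because $A$ is deterministic, the quadratic variation accumulated by $Z$ up to time $t$ is bounded by a deterministic constant, so $Z_t$ is actually a true (bounded) martingale with $E[Z_0] = 1$. Taking expectations yields
\begin{align}
E\!\left[\exp(i\xi\cdot X_t)\right] = \exp\!\left(-\tfrac{1}{2}\,\xi \cdot C_t\,\xi\right),
\end{align}
which is the characteristic function of the centered Gaussian with covariance $C_t$. By Lévy's inversion / uniqueness theorem for characteristic functions, $X_t$ is itself centered Gaussian with covariance $C_t$.

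The only step that requires mild care is justifying that $Z_t$ is a genuine martingale rather than just a local martingale, but since $|Z_t| = \exp(\tfrac{1}{2}\int_0^t \xi\cdot A(s)A^T(s)\xi\,ds)$ is deterministic and hence uniformly bounded on $[0,T]$ for each fixed $T$, a standard localization argument (or Novikov's criterion, trivially satisfied for deterministic integrands) gives the true-martingale property without effort. Everything else is a direct application of Itô's isometry and the standard construction of the stochastic integral.
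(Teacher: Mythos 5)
Your proof is correct and follows essentially the same route as the paper: both establish the martingale property of $X_t$ and then identify the law through the complex exponential martingale $Z_t$ and the resulting characteristic function $\exp(-\tfrac{1}{2}\xi\cdot C_t\xi)$. The only cosmetic difference is that you justify that $Z$ is a true martingale via the deterministic bound on $|Z_t|$, while the paper uses the $L^2$ criterion $E[\int_0^t |Z_s|^2\, d[k\cdot X]_s]<\infty$; both are valid.
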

\begin{proof}
All the  integrals exist so $X_t$ is a well defined continuous local martingale.  The quadratic covariation is 
\begin{align}
[X]^{ij}_t=\int_{0}^t A^i_k(s)  \delta^{kl} A^j_l(s) ds.
\end{align}
$E[[X]^{ii}_t]<\infty$ so $X_t$ (see problem 5.24 on p.38 of \cite{karatzas2014brownian}) is a martingale and we can construct the complex exponential martingale
\begin{align}
Z_t=&\exp(ik\cdot  X_t+k\cdot [X]_t\cdot k/2)\\
=&\exp\left(ik \cdot X_t+\frac{1}{2} \int_{0}^t k_i  A^i_k(s) \delta^{kl} A_l^j(s) k_jds\right).\notag
\end{align}
We note that this is a local martingale because
\begin{align}
Z_t=1+i\int_0^t Z_s d(k\cdot X_s)
\end{align}
and is in fact a martingale since
\begin{align}
E\left[\int_0^t |Z_s|^2 d[k\cdot X]_s\right]=\int_0^t \exp(k\cdot[X]_s\cdot k/2)   k_iA^i_k(s)  \delta^{kl} A^j_l(s) k_j ds<\infty.
\end{align}

The Fourier transform of the distribution of $X_t$ is
\begin{align}
&E[e^{i k\cdot X_t} ]=E[Z_t]  \exp\left(-\frac{1}{2} \int_{0}^t k_i  A^i_k(s) \delta^{kl} A_l^j(s) k_jds\right)\\
=&\exp(-\frac{1}{2}k\cdot C_t\cdot k),\label{FT_X}
\end{align}
where we used the fact that $Z_t$ is a martingale and $Z_0=1$, hence $E[Z_t]=1$.  \req{FT_X} equals the Fourier transform of the normal distribution with mean zero and covariance $C_t$, thereby proving the claim.
\end{proof}

Finally, we need a  martingale representation result for initially enlarged filtrations.  Its proof will rely on the following density lemma.

\begin{lemma}\label{density_lemma}
Let $(\Omega,\mathcal{F},P)$ be a  probability space and $X_1,X_2$ be  random quantities on $\Omega$, valued in measurable spaces $(\mathcal{X}_i,\mathcal{M}_i)$, $i=1,2$ resp.  Define
\begin{align}
D=\{1_{C_1}1_{C_2}:C_i\in\sigma(X_i)\}.
\end{align}
The span of $D$ is dense in $L^2(\sigma(X_1,X_2),P)$ (and hence in $L^2(\overline{\sigma(X_1,X_2)},P)$ as well).
\end{lemma}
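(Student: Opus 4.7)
The plan is to prove density by a monotone class / Dynkin $\pi$-$\lambda$ argument. Let $\mathcal{H}$ denote the closure in $L^2(\sigma(X_1,X_2),P)$ of the linear span of $D$. My goal is to show that $1_A \in \mathcal{H}$ for every $A \in \sigma(X_1,X_2)$; once that is done, density of simple functions in $L^2$ finishes the argument.

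First I would introduce the $\pi$-system
\begin{equation*}
\mathcal{P}=\{C_1\cap C_2: C_i\in\sigma(X_i),\, i=1,2\}.
\end{equation*}
Taking $C_2=\Omega\in\sigma(X_2)$ shows $\sigma(X_1)\subset\mathcal{P}$, and similarly $\sigma(X_2)\subset\mathcal{P}$, so $\sigma(\mathcal{P})=\sigma(X_1,X_2)$. Note that for any $C_1\cap C_2\in\mathcal{P}$ we have $1_{C_1\cap C_2}=1_{C_1}1_{C_2}\in D\subset\mathcal{H}$.

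Next I would verify that the class
\begin{equation*}
\mathcal{L}=\{A\in\sigma(X_1,X_2): 1_A\in\mathcal{H}\}
\end{equation*}
is a $\lambda$-system: it contains $\Omega=\Omega\cap\Omega\in\mathcal{P}\subset\mathcal{L}$; it is closed under proper differences because $1_{A\setminus B}=1_A-1_B$ lies in $\mathcal{H}$ whenever $B\subset A$ are in $\mathcal{L}$ (using that $\mathcal{H}$ is a linear subspace); and it is closed under increasing countable unions because if $A_n\uparrow A$ with $A_n\in\mathcal{L}$ then $1_{A_n}\to 1_A$ in $L^2(P)$ by dominated convergence, and $\mathcal{H}$ is closed. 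Since $\mathcal{L}\supset\mathcal{P}$, Dynkin's $\pi$-$\lambda$ theorem gives $\mathcal{L}\supset\sigma(\mathcal{P})=\sigma(X_1,X_2)$, so every indicator of a set in $\sigma(X_1,X_2)$ lies in $\mathcal{H}$. By linearity, all simple $\sigma(X_1,X_2)$-measurable functions lie in $\mathcal{H}$, and since these are dense in $L^2(\sigma(X_1,X_2),P)$ we conclude $\mathcal{H}=L^2(\sigma(X_1,X_2),P)$.

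For the parenthetical statement about the completion $\overline{\sigma(X_1,X_2)}$, I would note the standard fact that every $\overline{\sigma(X_1,X_2)}$-measurable function is $P$-a.s. equal to a $\sigma(X_1,X_2)$-measurable one (since the completion only adjoins $P$-null sets), hence $L^2(\overline{\sigma(X_1,X_2)},P)$ and $L^2(\sigma(X_1,X_2),P)$ coincide as subspaces of $L^2(\Omega,\mathcal{F},P)$, and the density result transfers. No step here is expected to pose difficulty; the only point requiring a little care is verifying closure of $\mathcal{L}$ under increasing unions in the $L^2$ sense, which is handled cleanly by dominated convergence applied to indicators bounded by $1$.
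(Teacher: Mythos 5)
Your proof is correct, but it takes a genuinely different route from the paper's. The paper argues by Hilbert-space duality: it suffices to show that any $g\in L^2(\sigma(X_1,X_2),P)$ orthogonal to every $1_{C_1}1_{C_2}$ vanishes; writing $g=\tilde g\circ(X_1,X_2)$ via the Doob--Dynkin factorization, orthogonality to $D$ says that the finite complex measure $\tilde g\,dP_{(X_1,X_2)}$ on $\mathcal{M}_1\otimes\mathcal{M}_2$ vanishes on all measurable rectangles, hence is the zero measure, so $\tilde g=0$ a.s.\ with respect to $P_{(X_1,X_2)}$ and thus $g=0$. You instead approximate directly: you apply Dynkin's $\pi$-$\lambda$ theorem to the class of sets whose indicators lie in the $L^2$-closure of $\operatorname{span} D$, with the $\pi$-system of intersections $C_1\cap C_2$ (which generates $\sigma(X_1,X_2)$), and finish with density of simple functions; all of your verifications (linearity for proper differences, dominated convergence in $L^2$ for increasing unions) are sound. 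What your approach buys is self-containedness: it stays entirely on $\Omega$ and avoids both the measurable factorization lemma that the paper cites and the uniqueness-of-measures step (a finite measure determined by its values on rectangles), which is itself usually proved by the same $\pi$-$\lambda$ device. The paper's orthogonality argument is shorter modulo those cited facts. Your explicit handling of the completion (null sets do not change the $L^2$ space, so density transfers) is also correct; the paper leaves that point implicit.
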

\begin{proof}

We need to show that if $g\in L^2(\sigma(X_1,X_2),P)$ is orthogonal to every element of $D$ then $g=0$.\\

  For $g\in L^2(\sigma(X_1,X_2),P)$ there exists a $\mathcal{M}_1\bigotimes\mathcal{M}_2$-measurable $\tilde g$ s.t. $g=\tilde g\circ (X_1,X_2)$ \cite{schervish2012theory}.  Suppose  $g$ is orthogonal to $D$. For any $C_i\in\sigma(X_i)$ we have $C_i=X_i^{-1}(D_i)$, $D_i\in \mathcal{M}_i$, hence
\begin{align}
0=&E[g 1_{C_1}1_{C_2}]=E[\tilde g(X_1,X_2) 1_{C_1}1_{C_2}]\\
=&\int \tilde g(x_1,x_2) 1_{D_1}(x_1)1_{D_2}(x_2) dP_{(X_1,X_2)}.\notag
\end{align}
where $P_{(X_1,X_2)}$ is the distribution of $(X_1,X_2)$.  Therefore $\tilde g(x_1,x_2)dP_{(X_1,X_2)}$ is a complex measure on $\mathcal{M}_1\bigotimes\mathcal{M}_2$ that vanishes on all rectangles, hence it is the zero measure.  So $\tilde g=0$ $P_{(X_1,X_2)}$-a.s.  Hence $g=0$ $P$-a.s.

\end{proof}

\begin{lemma}\label{mart_rep_lemma}
Let $X\in L^2(\overline{\mathcal{G}^{W,\mathcal{C}}}_\infty,P)$.  Then there exists a unique $f\in L^2([0,\infty)\times \Omega,Prog,ds\times P,\mathbb{R}^n)$ ( $\overline{\mathcal{G}^{W,\mathcal{C}}}_t$-progressively measurable $\mathbb{R}^n$-valued $L^2$ functions) such that $X=E(X|\mathcal{C})+\int_{0}^\infty f_s dW_s$. We also have $E(X|\overline{\mathcal{G}^{W,\mathcal{C}}}_t)=E(X|\mathcal{C})+\int_{0}^t f_s dW_s$ for every $t$.
\end{lemma}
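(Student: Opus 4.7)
The plan is to reduce the statement to the classical It\^o Brownian martingale representation theorem by a density argument, exploiting the fact that the enlargement of $\mathcal{F}^W$ by $\mathcal{C}$ is \emph{initial} and independent. First I would identify a convenient dense subset: take $X_1 = W$ viewed as a $C([0,\infty),\mathbb{R}^n)$-valued random variable, so that $\sigma(X_1) = \mathcal{F}^W_\infty$, and let $X_2$ be the identity map on $(\Omega,\mathcal{C})$, so that $\sigma(X_2) = \mathcal{C}$. Lemma \ref{density_lemma} then gives density, in $L^2(\overline{\mathcal{G}^{W,\mathcal{C}}}_\infty,P)$, of the linear span of products $1_A 1_B$ with $A\in\mathcal{F}^W_\infty$ and $B\in\mathcal{C}$.

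On such a simple random variable $X = 1_A 1_B$ the computation is transparent. Independence of $\mathcal{F}^W_\infty$ and $\mathcal{C}$ yields $E(X|\mathcal{C}) = P(A)\,1_B$, hence $X - E(X|\mathcal{C}) = 1_B\bigl(1_A - P(A)\bigr)$. The centered factor $1_A - P(A) \in L^2(\mathcal{F}^W_\infty, P)$ admits the classical representation $1_A - P(A) = \int_0^\infty g_s\, dW_s$ with a unique $g$ progressively measurable for $\mathcal{F}^W_t$. Since $1_B$ is $\mathcal{C}$-measurable, hence $\overline{\mathcal{G}^{W,\mathcal{C}}_0}$-measurable, it can be absorbed into the integrand, giving $X - E(X|\mathcal{C}) = \int_0^\infty 1_B g_s\, dW_s$ with $f_s := 1_B g_s$ progressively measurable for the enlarged filtration (since it is a product of an $\mathcal{F}_0$-measurable factor and an $\mathcal{F}^W_t$-progressive factor). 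Linearity extends the representation to the entire span.

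Extension to all of $L^2(\overline{\mathcal{G}^{W,\mathcal{C}}}_\infty, P)$ is then routine: for $X_n \to X$ in $L^2$, contractivity of conditional expectation gives $E(X_n|\mathcal{C}) \to E(X|\mathcal{C})$ in $L^2$, so the stochastic integrals $X_n - E(X_n|\mathcal{C})$ are Cauchy in $L^2$, and the It\^o isometry makes the corresponding integrands $f^n$ Cauchy in $L^2([0,\infty)\times\Omega,\,ds\times dP)$. Their limit $f$ (which can be chosen progressively measurable) furnishes the desired representation $X = E(X|\mathcal{C}) + \int_0^\infty f_s\, dW_s$, and uniqueness of $f$ follows directly from the It\^o isometry. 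For the final identity, I would condition on $\mathcal{F}_t = \overline{\mathcal{G}^{W,\mathcal{C}}_t}$: the term $E(X|\mathcal{C})$ is preserved because $\mathcal{C}\subset\mathcal{F}_0\subset\mathcal{F}_t$, while $\int_0^t f_s\, dW_s$ is the conditional expectation of $\int_0^\infty f_s\, dW_s$, since Assumption \ref{assump:prob_space} guarantees that $W$ is still a Wiener process (and hence the stochastic integral a martingale) in the enlarged filtration.

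The main obstacle is ensuring that the integrand $f$ we produce is progressively measurable for $\overline{\mathcal{G}^{W,\mathcal{C}}_t}$, not merely for the smaller filtration $\mathcal{F}^W_t$. This is precisely what forces the above handling of the simple functions: absorbing $1_B$ into the integrand at the simple-function stage, before taking the $L^2$-limit, is what lifts the classical Brownian representation from $\mathcal{F}^W_t$ to the initially enlarged filtration. Once this step is in place, the remaining arguments are standard $L^2$-approximation and martingale facts.
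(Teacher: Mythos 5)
Your proposal is correct and follows essentially the same route as the paper: reduce to the classical Brownian martingale representation on the dense span of products $1_A 1_B$ ($A\in\mathcal{F}^W_\infty$, $B\in\mathcal{C}$) given by Lemma \ref{density_lemma}, absorb the $\mathcal{C}$-measurable (hence $\mathcal{F}_0$-measurable) indicator into the integrand, pass to the $L^2$-limit and get uniqueness via the It\^o isometry, and obtain the conditional-expectation identity from the fact that $W$ remains a Wiener process for the enlarged filtration. The only cosmetic difference is that the paper handles $1_C Y$ with $Y\in L^2(\mathcal{F}^W_\infty)$ directly rather than just indicators, which changes nothing of substance.
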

\begin{proof}

For any $f\in  L^2([0,\infty)\times \Omega,Prog,ds\times P,\mathbb{R}^n)$, $M_t=\int_{0}^t f_s dW_s$ is a continuous $L^2$-bounded martingale, hence it has an $L^2$ limit
\begin{align}
M_\infty=\int_{0}^\infty u_s dW_s\in  L^2(\overline{\mathcal{G}^{W,\mathcal{C}}}_\infty,P)
\end{align}
 and $M_t=E(M_\infty|\overline{\mathcal{G}^{W,\mathcal{C}}}_t)$.

If there exists two such function $f$, $\tilde f$ then $\int_{0}^\infty f_s-\tilde f_s dW_s=0$.  Using the It\^o isometry,
\begin{align}
0=&\left\|\int_{0}^\infty f_s-\tilde f_s dW_s\right\|^2_{L^2(P)}=\lim_{N\rightarrow\infty}E\left[\int_{0}^N \|f_s-\tilde f_s\|^2ds\right]\\
=&\|f-\tilde f\|_{L^2(ds\times P)}^2,\notag
\end{align}
so we have uniqueness.

Let $A$ be the set of $X\in L^2(\overline{\mathcal{G}^{W,\mathcal{C}}}_\infty,P)$ for which the claim holds. It is obviously a vector space. 

We show that $A$ is closed:\\
Let $X_j\in A$ with $X_j\rightarrow X$ in $L^2$.  Then there exists $f^j\in L^2(ds\times P,\mathbb{R}^n)$ s.t. $X_j=E(X_n|\mathcal{C})+\int_{0}^\infty f^j_s dW_s$.  By the It\^o isometry
\begin{align}
&\|X_j-X_l-E(X_j|\mathcal{C})+E(X_l|\mathcal{C})\|^2_{L^2(P)}=\|f^j-f^l\|^2_{L^2(ds\times P)}
\end{align}
and therefore
\begin{align}
\|f^j-f^l\|_{L^2(ds\times P)}\leq 2\|X_j-X_l\|_{L^2(P)}\rightarrow 0
\end{align}
as $j,l\rightarrow\infty$.  So $f^j$ is Cauchy.  By completeness, there exists $f\in L^2(ds\times P,\mathbb{R}^n)$ s.t. $f^j\rightarrow f$ in $L^2$.  Therefore
\begin{align}
X-E(X|\mathcal{C})=L^2-\lim_{j\to \infty}\int_{0}^\infty f^j_s dW_s.
\end{align}
Again by the It\^o isometry,
\begin{align}
\left\|\int_{0}^\infty f^j dW_s-\int_{0}^\infty f dW_s\right\|_{L^2(P)}=\|f^j-f\|_{L^2(ds\times P)}\rightarrow 0
\end{align}
so
\begin{align}
X-E(X|\mathcal{C})=\int_{0}^\infty f_s dW_s
\end{align}
which proves the claim.

If we can show that $A$ contains a  subset of  $L^2(\overline{\mathcal{G}^{W,\mathcal{C}}}_\infty,P)$ whose span is dense then we are done:\\
Given $Y\in L^2(\overline{\mathcal{F}^W}_\infty,P)$ the martingale representation for non-initially enlarged filtrations, i.e. for $\mathcal{C}$  the trivial sigma algebra, (see Section 3.4 in \cite{karatzas2014brownian}) gives $Y=E[Y]+\int_{0}^\infty h_s dW_s$ where $h\in L^2([0,\infty)\times \Omega,Prog,ds\times P,\mathbb{R}^n)$.  Therefore, for $C\in\mathcal{C}$ we have 
\begin{align}
1_C Y=1_CE[Y]+\int_{0}^\infty 1_Ch_s dW_s,
\end{align}
where we used the fact that $1_C$ is independent of $t$ and is $\overline{\mathcal{G}^{W,\mathcal{C}}}_{0}$-measurable.

  $1_Ch_s\in  L^2([0,\infty)\times \Omega,Prog,ds\times P,\mathbb{R}^n)$ and 
\begin{align}
&E(1_CY|\mathcal{C})=1_C E[Y]+ L^2-\lim_{N\rightarrow\infty} E\left(\left.\int_{0}^N 1_Ch_s dW_s\right|\mathcal{C}\right)\notag\\
=&1_C E[Y]+ \lim_{N\rightarrow\infty} E\left(\left.E\left(\left.\int_{0}^N 1_Ch_s dW_s\right|\overline{\mathcal{G}^{W,\mathcal{C}}}_{0}\right)\right|\mathcal{C}\right)\notag\\
=&1_C E[Y].\notag
\end{align}

This proves $1_CY\in A$.  In particular, $1_C1_B\in A$ for all $B\in \mathcal{F}^W_\infty$. Lemma \ref{density_lemma} applied to the random variables $Id:(\Omega,\mathcal{F})\rightarrow (\Omega,\mathcal{C})$ and $W:(\Omega,\mathcal{F})\rightarrow C([0,\infty),\mathbb{R}^n)$ proves that the span of such functions is dense  in $L^2(\overline{\mathcal{G}^{W,\mathcal{C}}}_\infty,P)$, thereby completing the proof.

\end{proof}

\subsection*{Acknowledgments}

J.W. was partially supported by NSF grants DMS 131271 and DMS 1615045.\\

\bibliographystyle{ieeetr}
\bibliography{refs}

\end{document}